\newcommand{\Substr}{\mathit{Substr}}
\newcommand{\Prefix}{\mathit{Prefix}}
\newcommand{\Endpos}{\mathit{EndPos}}
\newcommand{\RLE}{\mathit{RLE}}
\newcommand{\Rank}{\mathit{rank}}
\newcommand{\Select}{\mathit{select}}
\newcommand{\NumSetBits}{\mathit{pc}}
\newcommand{\Meta}[1]{{\langle#1\rangle}}
\newcommand{\PackedDAWG}{packed DAWG}
\newcommand{\Points}{\mathit{Points}}
\newcommand{\RLEPoints}{\mathit{Pts}}
\newcommand{\RLEDomPoints}{\mathit{Dom}}
\newcommand{\revert}[1]{#1^{rev}}
\newcommand{\PM}{\mathit{M}}
\newcommand{\PP}{\mathit{X}}
\newcommand{\headStr}{\mathit{\alpha}}
\newcommand{\bodyStr}{\mathit{\beta}}
\newcommand{\tailStr}{\mathit{\gamma}}
\newcommand{\maxe}{\mathrm{mpe}}
\newcommand{\longest}[1]{\overleftarrow{#1}}
\newcommand{\Rangequery}{\mathit{find\_any}}
\newcommand{\minpos}{\mathit{pos}}
\newcommand{\ignore}[1]{}
\author{
  Jun'ichi~Yamamoto
  \and
  Tomohiro~I
  \and
  Hideo~Bannai %
  \and
  Shunsuke~Inenaga %
  \and
  Masayuki~Takeda %
}
\institute{
  Department of Informatics, Kyushu University\\
  \email{\{tomohiro.i,bannai,inenaga,takeda\}@inf.kyushu-u.ac.jp}\\
}
\title{
  Faster Compact On-Line Lempel-Ziv Factorization 
}
\date{}
\begin{document}
\maketitle
\begin{abstract}
  We present a new on-line algorithm for computing the Lempel-Ziv
  factorization of a string
  that runs in $O(N\log N)$ time and uses only $O(N\log\sigma)$ bits of
  working space, where
  $N$ is the length of the string and $\sigma$ is the size of the alphabet.
  This is a notable improvement compared to the performance of previous on-line
  algorithms using the same order of working space but
  running in either $O(N\log^3N)$ time (Okanohara \& Sadakane 2009) or
  $O(N\log^2N)$ time (Starikovskaya 2012).
  The key to our new algorithm is in the utilization of
  an elegant but less popular index structure called Directed Acyclic Word
  Graphs, or DAWGs (Blumer et al. 1985).
  We also present an opportunistic variant of our algorithm,
  which, given the run length encoding of size $m$ of a string of length $N$, 
  computes the Lempel-Ziv factorization on-line, in 
  $O\left(m \cdot \min \left\{\frac{(\log\log m)(\log \log N)}{\log\log\log N}, \sqrt{ \frac{\log m}{\log \log m}} \right\}\right)$ time and
  $O(m\log N)$ bits of space,
  which is faster and more space efficient when the string is run-length compressible.
\end{abstract}
\section{Introduction}
The Lempel-Ziv (LZ) factorization of a string~\cite{LZ77},
discovered over 35 years ago,
captures important properties concerning repeated occurrences
of substrings in the string,
and has numerous applications in the field of data compression,
compressed full text indices~\cite{kreft11:_self_index_based_lz77},
and is also the key component
to various efficient algorithms on strings~\cite{kolpakov99:_findin_maxim_repet_in_word,duval04:_linear}.
Therefore, a large amount of work has been devoted to its
efficient computation, especially in the {\em off-line} setting where
the text is static,
and the LZ factorization can be computed in as fast as $O(N)$ time
assuming an integer alphabet, using $O(N\log N)$ or less bits of
space (See~\cite{a.ss:_lempel_ziv_lz77} for a survey; more
recent results are~\cite{ohlebusch11:_lempel_ziv_factor_revis,kempa13:_lempel_ziv,goto13:_simpl_faster_lempel_ziv_factor}).
There is much less work for the {\em on-line} setting, where new
characters may be appended to the end of the string.
If we may use $O(N\log N)$ bits of space,
the problem can be solved in $O(N\log \sigma)$ time
where $\sigma$ is the size of the alphabet, 
by use of string indicies such as suffix trees~\cite{Weiner}
and on-line algorithms to construct them~\cite{Ukk95}.
However, when $\sigma$ is small and $N$ is very large (e.g. DNA),
the $O(N\log N)$ bits space complexity is much larger than the
$N\log\sigma$ bits of the input text, and can be prohibitive.
To solve this problem, space efficient on-line algorithms 
for LZ factorization based on succinct data structures have been
proposed.
Okanohara and
Sadakane~\cite{okanohara08:_onlin_algor_findin_longes_previous_factor}
gave an algorithm that runs in $O(N\log^3 N)$ time using
$N\log\sigma + o(N\log\sigma) + O(N)$ bits of
space. 
Later Starikovskaya~\cite{starikovskaya12:_comput_lempel_ziv_factor_onlin},
achieved $O(N\log^2 N)$ time using $O(N\log\sigma)$ bits of space,
assuming $\log_\sigma N$ characters are packed in a machine word.

In this paper, we propose a new on-line LZ factorization algorithm
running in $O(N\log N)$ time using only $O(N\log\sigma)$ space,
which is a notable improvement compared to the run-times of the previous on-line algorithms
while still keeping the working space within a constant factor of the input text.
Our algorithm is based on a novel application of a
full text index called Directed Acyclic Word Graphs, or
DAWGs~\cite{blumer85:_small_autom_recog_subwor_text}, which,
despite its elegance, has not received as much attention as suffix trees.
To achieve a more efficient algorithm,
we exploit an interesting feature of the DAWG structure that, unlike
suffix trees, allows us to collect information concerning the
left context of strings into each state in an efficient and on-line
manner.
We further show that the DAWG allows for an opportunistic variant of
the algorithm which is more time and space efficient when 
the run length encoding (RLE) of the string is small.
Given the RLE of size $m \leq N$ of the string,
our on-line algorithm runs in 
$O\left(m \cdot \min \left\{\frac{(\log\log m)(\log \log N)}{\log\log\log N}, \sqrt{ \frac{\log m}{\log \log m}} \right\}\right)= o(m \log m)$
time using $O(m \log N)$ bits of space.
This improves on the \emph{off-line} algorithm of~\cite{eltabakh08:_sbc}
which runs in $O(m \log m)$ time using $O(m \log N)$ bits of space. 

\section{Preliminaries}
\label{sec:preliminaries}
Let $\Sigma = \{1,\ldots,\sigma\}$
be a finite integer {\em alphabet}.
An element of $\Sigma^*$ is called a {\em string}.
The length of a string $S$ is denoted by $|S|$. 
The empty string $\varepsilon$ is the string of length 0.
Let $\Sigma^+ = \Sigma^* - \{\varepsilon\}$.
For a string $S = XYZ$, $X$, $Y$ and $Z$ are called
a \emph{prefix}, \emph{substring}, and \emph{suffix} of $S$, respectively.
The set of prefixes and substrings of $S$ are denoted by $\Prefix(S)$ and $\Substr(S)$,
respectively.
The \emph{longest common prefix} (lcp) of strings $X,Y$ %
is the longest string in $\Prefix(X) \cap \Prefix(Y)$.
The $i$-th character of a string $S$ is denoted by 
$S[i]$ for $1 \leq i \leq |S|$,
and the substring of a string $S$ that begins at position $i$ and
ends at position $j$ is denoted by $S[i..j]$ for $1 \leq i \leq j \leq |S|$.
For convenience, let $S[i..j] = \varepsilon$ if $j < i$.
A position $i$ is called an {\em occurrence} of $X$ in $S$ if $S[i..i+|X|-1]
= X$.
For any string $S = S[1..N]$, let $\revert{S} = S[N]\cdots S[1]$
denote the reversed string.
For any character $a \in \Sigma$ and integer $i\geq 0$,
let $a^0 = \varepsilon$, $a^i = a^{i-1}a$.
We call $i$ the {\em exponent} of $a^i$.

The default base of logarithms will be 2.
Our model of computation is the unit cost word RAM
with the machine word size at least $\log N$ bits.
For an input string $S$ of length $N$,
let $r = \log_\sigma N = \frac{\log N}{\log\sigma}$. 
For simplicity, assume that $\log N$ is divisible by $\log\sigma$,
and that $N$ is divisible by $r$.
A string of length $r$, called a {\em meta-character}, consists of
$\log N$ bits, and therefore fits in a single machine word.
Thus, a meta-character can also be transparently regarded as an
element in the integer alphabet
$\Sigma^r = \{1,\ldots,N\}$.
We assume that given $1\leq i \leq N-r+1$, 
any meta-character $A=S[i..i+r-1]$ can be retrieved in constant time.
Also, we can pre-compute an array of size $2^{\frac{\log N}{2}}$
occupying $O(\sqrt{N}\log N) = o(N)$ bits in $o(N)$ time, so
$\revert{A} = \revert{(A[r/2+1..r])}\revert{(A[1..r/2])}$
can be computed in constant time.
We call a string on the alphabet $\Sigma^{r}$ of meta-characters,
a {\em meta-string}.
Any string $S$ whose length is divisible by $r$
can be viewed as a meta-string $S$ of length $n = \frac{|S|}{r}$.
We write $\Meta{S}$ when we explicitly view string $S$ as a meta-string,
where $\Meta{S}[j] = S[(j-1)r+1 .. jr]$ for each $j \in [1, n]$.
Such range $[(j-1)r+1,jr]$ of positions will be called {\em meta-blocks}
and the beginning positions $(j-1)r+1$ of meta-blocks will be called 
{\em block borders}.
For clarity, the length $n$ of a meta-string $\Meta{S}$ will be denoted by $\|\Meta{S}\|$.
Meta-strings are sometimes called  {\em packed} strings.
Note that $n \log N = N\log\sigma$.

\subsection{LZ Factorization}
There are several variants of LZ factorization, and
as in most recent work, we consider the variant also called
s-factorization~\cite{crochemore84:_linear}.
  The s-factorization of a string $S$ is
  the factorization $S = f_1 \cdots f_z$ where each
  s-factor $f_i\in\Sigma^+~(i=1,\ldots,z)$
  is defined as follows:
  $f_1 = S[1]$. For $i \geq 2$:
  if $S[|f_1\cdots f_{i-1}|+1] = c \in \Sigma$ does not occur in $f_1\cdots f_{i-1}$,
  then $f_i = c$. Otherwise, $f_i$ is the longest prefix of $f_i
  \cdots f_z$ that occurs at least twice in $f_1 \cdots f_i$.
Notice that self-referencing is allowed, i.e.,
the previous occurrence of $f_i$ may overlap with itself.
Each s-factor can be represented in a constant number of words, i.e., 
either as a single character or a pair of integers representing the
position of a previous occurrence of the factor and its length.
(See Fig.~\ref{fig:s-factorization} in Appendix A. for an example.)
\subsection{Tools}

Let $B$ be a bit array of length $N$.
For any position $x$ of $B$,
let $\Rank(B,x)$ denote the number of 1's in $B[1..x]$.
For any integer $j$, let $\Select(B,j)$ denote the position of the
$j$th 1 in $B$.
For any pair of position $x,y~(x\leq y)$ of $B$,
the number of 1's in $B[x..y]$ can be expressed as
$\NumSetBits(B,x,y) = \Rank(B,y) - \Rank(B,x-1)$.
Dynamic bit arrays can be maintained
to support rank/select queries and flip operations in $O(\log N)$
time, using $N+o(N)$ bits of space
(e.g. Raman et al.~\cite{raman01:_succin_dynam_data_struc}).\\

Directed Acyclic Word Graphs (DAWG) are a variant of suffix indices,
similar to suffix trees or suffix arrays.
The DAWG of a string $S$ is the smallest partial deterministic finite automaton that accepts all
suffixes of $S$. Thus, an arbitrary string is a substring of $S$
iff it can be traversed from the source of the DAWG.
While each edge of the suffix tree corresponds to a
substring of $S$, an edge of a DAWG corresponds to a single character.
\begin{theorem}[Blumer et al.~\cite{blumer85:_small_autom_recog_subwor_text}]
  \label{theorem:orig_dawg}
  The numbers of states, edges and suffix links of the DAWG are 
  $O(|S|)$, independent of the alphabet size $\sigma$.
  The DAWG augmented with the suffix links
  can be constructed in an on-line manner in $O(|S|\log\sigma)$
  time using $O(|S| \log |S|)$ bits of space.    
\end{theorem}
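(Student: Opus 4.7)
The plan is to prove the theorem by the standard endpos-equivalence approach of Blumer et al. For any substring $X \in \Substr(S)$, let $\Endpos_S(X) = \{ i : S[i-|X|+1 .. i] = X\}$. I would first define two substrings of $S$ to be equivalent iff they have the same endpos set; the states of the DAWG are then put in bijection with the nonempty equivalence classes (plus a source for $\varepsilon$ and possibly a sink). A key structural fact to establish is that whenever two substrings share at least one end position, one is a suffix of the other; hence, within a single class the members form a chain of suffixes $X_1 \supsetneq X_2 \supsetneq \cdots$ (ordered by length), so each class has a unique longest representative $\longest{q}$. Defining the suffix link $\slink(q)$ to point to the class of the longest proper suffix of $\longest{q}$ that is not in $q$'s class, the suffix links form a tree on the states, and each transition edge of the DAWG is labeled by a single character.

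For the linear bounds I would argue as follows. The suffix link tree has at most $|S|+1$ leaves, each associated with a distinct suffix of $S$, and each internal node has at least two children (since a class splits only when two suffixes diverge in their right extensions); a standard tree counting argument then gives at most $2|S|-1$ states. For the edge bound, I would distinguish \emph{primary} edges (those lying on the path spelling the longest suffix being processed when the target state was created) and \emph{secondary} edges; primary edges form a spanning tree of the DAWG (rooted at the source) and hence number at most $|\mathrm{states}|-1$, while each secondary edge can be injectively charged to a distinct suffix of $S$, yielding $O(|S|)$ total edges. The number of suffix links equals the number of non-source states, which is also $O(|S|)$.

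For the on-line construction I would describe the incremental algorithm: maintain the DAWG of $S[1..i]$ together with a pointer to the state $\longest{\mathrm{last}}$ representing $S[1..i]$. When appending a new character $a$, create a new state $q_{\mathrm{new}}$ for $S[1..i+1]$; then walk up the suffix link chain from $\longest{\mathrm{last}}$ adding $a$-edges to $q_{\mathrm{new}}$, stopping at the first state $p$ that already has an outgoing $a$-edge to some state $q$. If $\longest{q}$ equals the string represented by the $a$-transition from $p$, just set $\slink(q_{\mathrm{new}}) = q$; otherwise, clone $q$ into a new state $q'$ that inherits $q$'s outgoing edges and suffix link, redirect the relevant $a$-edges on the suffix link chain from $p$ upward to point to $q'$, and set $\slink(q) = \slink(q_{\mathrm{new}}) = q'$. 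Correctness follows by showing that $q'$ acquires exactly the new endpos class that split off when $a$ was appended.

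Finally, for the complexity I would amortize: because each step either follows a suffix link (bounded by the decrease in $\depth(\longest{\mathrm{last}})$, whose total change over all insertions is $O(|S|)$) or creates/redirects $O(1)$ edges per cloned state (and clonings are charged to new states), the total work is $O(|S|)$ edge operations. Each operation looks up or installs an outgoing edge at some state; storing outgoing edges per state in a balanced BST keyed by $\Sigma$ gives $O(\log \sigma)$ per operation, for $O(|S|\log\sigma)$ total time. Since the structure has $O(|S|)$ states, edges, and suffix links, and each is stored in $O(\log|S|)$ bits (for pointers and labels), the space is $O(|S|\log|S|)$ bits. The main obstacle I expect is the cloning case in the on-line update: verifying that exactly those $a$-edges on the suffix link chain need redirection and that the suffix links remain consistent after cloning requires a careful invariant about how endpos classes split when a character is appended; the rest of the argument is essentially bookkeeping on top of this invariant.
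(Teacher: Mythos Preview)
The paper does not supply its own proof of this theorem: it is quoted verbatim as a result of Blumer et al.\ and used as a black box. The only related material the paper provides is the formal endpos-based definition of DAWG states, edges, and suffix links immediately following the theorem, and the pseudocode of Blumer et al.'s on-line construction reproduced in Appendix~B (Algorithm~\ref{algorithm:onlineDAWGconstruction}). Your proposal is a faithful reconstruction of exactly that classical argument---endpos equivalence classes, the suffix-link tree viewpoint for the state bound, the primary/secondary edge split for the edge bound, and the clone-on-secondary-hit update with suffix-link amortization---and it matches both the paper's definitions and the pseudocode in Appendix~B.

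One minor point worth tightening: your sentence ``the suffix link tree has at most $|S|+1$ leaves, each associated with a distinct suffix of $S$'' is not quite the right correspondence. The suffix-link tree of the DAWG of $S$ is (up to relabeling) the suffix tree of $\revert{S}$, so its leaves correspond to suffixes of $\revert{S}$, i.e., prefixes of $S$; the ``internal nodes have $\geq 2$ children'' step then goes through as in any suffix-tree node count. This does not affect the $O(|S|)$ conclusion, but the association you state would not survive scrutiny as written.
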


We give a more formal presentation of DAWGs below.
Let $\Endpos_S(u) = \{ j \mid u = S[i..j], 1 \leq i \leq j \leq N \}$.
Define an equivalence relation on $\Substr(S)$
such that for any $u,w \in \Substr(S)$, $u \equiv_S w \iff \Endpos_S(u) = \Endpos_S(w)$, and denote
the equivalence class of $u\in\Substr(S)$ as $[u]_S$.
When clear from the context, we abbreviate the above notations
as $\Endpos$, $\equiv$ and $[u]$, respectively.
Note that for any two elements in $[u]$, one is a suffix of the
other (or vice versa).
We denote by $\longest{u}$ the longest member of $[u]$.
The states $V$ and edges $E$ of a DAWG can be characterized as
$V =  \{[u] \mid u \in \Substr(S)\}$ and
$E = \{([u],a,[ua]) \mid u, ua \in \Substr(S), u\not\equiv ua\}$.
We also define the set $G$ of labeled reversed edges, called {\em suffix links}, 
by $G = \{([au], a, [u]) \mid u, au \in \Substr(S), u = \longest{u}\}$.
An edge $([u], a, [ua])\in E$ is called a {\em primary} edge
if $|\longest{u}| + 1 = |\longest{ua}|$, 
and a {\em secondary} edge otherwise. 
We call $[ua]$ a primary (resp. secondary) child of $[u]$
if the edge is primary (resp. secondary).
(See Fig.~\ref{fig:DAWG} in Appendix for examples.)
By storing $|\longest{u}|$ at each state $[u]$,
we can determine whether an edge $([u],a,[ua])$ is primary or secondary 
in $O(1)$ time using $O(|S| \log |S|)$ bits of total space.

Whenever a state $u$ is created during the on-line
construction of the DAWG, it is possible to assign the position
$\minpos_{[u]} = \min\Endpos_S(u)$ to that state.
If state $u$ is reached by traversing the DAWG from the source with
string $p$,
this means that $p = S[\minpos_{[u]} - |p| + 1..\minpos_{[u]}]$, and thus
the first occurrence $\minpos_{[u]}-|p|+1$ of $p$ can be retrieved,
using $O(|S|\log |S|)$ bits of total space.\\

For any set $P$ of points on a 2-D plain, consider query $\Rangequery(P, I_h, I_t)$ 
which returns an \emph{arbitrary} element in $P$ that is contained in
a given orthogonal range $I_h \times I_t$
if such exists, and returns $\mathbf{nil}$ otherwise.
A simple corollary of the following result by Blelloch~\cite{blelloch08:_space}:
\begin{theorem}[Blelloch~\cite{blelloch08:_space}]
  \label{theorem:blelloch}
  The 2D dynamic orthogonal range reporting problem on $n$ elements
  can be solved using 
  $O(n\log n)$ bits of space
  so that insertions and deletions take $O(\log n)$ 
  amortized
  time and range
  reporting queries take $O(\log n + k \log n/ \log \log n)$
  time, where $k$ is the number of output elements.
\end{theorem}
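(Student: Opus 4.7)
The plan is to construct a two-level tree, following the standard recipe for 2D dynamic orthogonal range reporting and adapting it to the stated space and query bounds. First I would maintain a dynamic balanced search tree (e.g., a weight-balanced B-tree) over the $x$-coordinates of the current point set, supporting insertions and deletions in amortized $O(\log n)$ time. Second, at each internal node of this primary tree, I would store the points of its $x$-subtree in a secondary structure sorted by $y$-coordinate. A query on a rectangle $I_h \times I_t$ then decomposes $I_h$ into $O(\log n)$ canonical subtrees and issues a 1D $y$-range report on each secondary structure, stopping as soon as a point is found (which suffices for the existential query $\Rangequery$, and extends to general reporting by scanning through matches).

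Two targets must be hit simultaneously: $O(n \log n)$ bits of space and the $O(\log n + k \log n / \log \log n)$ query time. For space, a naive copy of each point into $O(\log n)$ secondary structures would cost $O(n \log^2 n)$ bits; to avoid this, each secondary structure stores only ranks or pointers into a global sorted order, so each level contributes $O(n \log n)$ bits overall and the total across levels stays $O(n \log n)$ bits rather than blowing up by another $\log n$ factor. For the per-output cost, I would implement each secondary structure as a B-tree of fan-out $\Theta(\log^{\varepsilon} n)$, so that once the correct leaf block is reached, each of the $k$ reported points is extracted in $O(\log_{\log^{\varepsilon} n} n) = O(\log n / \log \log n)$ time via standard B-tree traversal with bit-packed block scans.

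The main obstacle is keeping updates amortized $O(\log n)$ \emph{and} preserving the $O(n \log n)$-bit layout as the tree rebalances. Here I would rely on the weight-balanced scheme: whenever a node of weight $w$ goes out of balance, it is rebuilt from scratch, which happens only after $\Theta(w)$ updates cross it, so the total rebuilding cost amortizes to $O(\log n)$ per update, and since each rebuild touches only $O(w)$ points it restores the compact rank-based layout locally without harming the global space bound. Since the statement is quoted verbatim from Blelloch~\cite{blelloch08:_space}, the present paper really only needs to cite it, and any written proof here reduces to a pointer to that reference rather than a new argument.
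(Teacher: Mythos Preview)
Your final sentence has it exactly right: the paper does not prove this theorem at all; it is stated as a black-box citation of Blelloch~\cite{blelloch08:_space} and used only through its consequence that a $\Rangequery$ query can be answered in $O(\log n)$ time. So there is nothing to compare your sketch against---the ``proof'' in the paper is just the reference.
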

is that the query $\Rangequery(P, I_h, I_t)$ can be answered in $O(\log n)$
time on a dynamic set $P$ of points.
It is also possible to extend the $\Rangequery$ query to return,
in $O(\log n)$ time, a constant number of elements contained in the range.

\section{On-line LZ Factorization with Packed Strings}
The problem setting and high-level structure of our algorithm follows
that of Starikovskaya~\cite{starikovskaya12:_comput_lempel_ziv_factor_onlin},
but we employ somewhat different tools. 
The goal of this section is to prove the following theorem.
\begin{theorem}\label{theorem:maintheorem}
  The s-factorization of any string $S\in\Sigma^*$ of length $N$ can be computed in an on-line manner
  in $O(N \log N)$ time and $O(N\log\sigma)$ bits of space.
\end{theorem}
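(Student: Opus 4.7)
The plan is to maintain on-line the DAWG of the meta-string $\Meta{S}$, where each meta-character packs $r = \log_\sigma N$ original characters into one machine word. Applying Theorem~\ref{theorem:orig_dawg} to $\Meta{S}$ (of length $n = N/r$ over the integer alphabet $\{1,\ldots,N\}$), the DAWG together with its suffix links is built in $O(n\log n)=O(N\log\sigma)$ time using $O(n\log n)=O(N\log\sigma)$ bits of space, where each state's outgoing transitions are stored in a balanced search tree so that edge lookups and insertions cost $O(\log n)=O(\log N)$. I also store, at each state $[u]$, the first occurrence $\minpos_{[u]}$ and the length $|\longest{u}|$, following the discussion of the preliminaries.

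To emit the s-factorization on-line, I keep a pointer $p$ to the starting position of the next factor $f_i$. Because $p$ and $p+|f_i|-1$ usually fall inside meta-blocks, I decompose $f_i = \headStr\bodyStr\tailStr$ into a head $\headStr$ of length $<r$ that finishes the current meta-block, a body $\bodyStr$ consisting of whole meta-characters, and a tail $\tailStr$ of length $<r$. The body corresponds to a contiguous substring of $\Meta{S}$ and is located by traversing the DAWG; the head and tail are short strings that must be reconciled with meta-block borders. At each DAWG state $[u]$, I attach a dynamic 2D point set (Theorem~\ref{theorem:blelloch}) whose elements encode previous occurrences of $\longest{u}$: the first coordinate is the text position and the second is the alignment offset modulo $r$. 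Insertions cost $O(\log n)$ amortized and the whole collection consumes $O(N\log\sigma)$ bits, since the total number of stored points is $O(n)$.

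To compute $f_i$, I enumerate the $r$ possible alignments of a previous occurrence with respect to $p$; for each alignment the candidate body corresponds to a unique DAWG state, and a single $\Rangequery$ on its point set returns a witness, with head/tail length determined by comparing the meta-characters flanking $\bodyStr$ against those of $S$ (performed in $O(1)$ via packed word operations, using the precomputed reversal table mentioned in the preliminaries). Walking suffix links lets me shrink the body by one meta-character without restarting the DAWG navigation, so that the sequence of candidate bodies of decreasing length is enumerated in amortized $O(\log N)$ per factor.

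The main obstacle I expect is bounding the cost of sweeping all $r$ alignments and of extending the head and tail across misaligned block borders without paying $\Omega(r)$ per factor. The plan is to share work between consecutive factors: because the head of $f_i$ begins where the tail of $f_{i-1}$ ended, the head/tail bookkeeping can be charged against a single $O(\log N)$ range-query cost per factor, and the per-block DAWG and point-structure updates are charged to the at most $n = N/r$ meta-block boundaries. Summing $O(\log N)$ per factor over $z\leq N$ factors, plus $O(n\log n)=O(N\log\sigma)$ for on-line DAWG construction and auxiliary dynamic bit arrays used to translate text positions to meta-positions via rank/select, gives the claimed $O(N\log N)$ time; the dominant space is the DAWG and the attached range structures, totalling $O(N\log\sigma)$ bits, as stated in Theorem~\ref{theorem:maintheorem}.
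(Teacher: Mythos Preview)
Your proposal has the right high-level architecture (packed DAWG on $\Meta{S}$, head/body/tail decomposition, dynamic range structures at the states), but the core data structure and its analysis do not work as stated.

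\textbf{The point sets do not solve the right problem.} You say that at each state $[u]$ you store points whose first coordinate is a text position and whose second is an alignment offset modulo $r$. But every occurrence of $\longest{u}$ in $\Meta{S}$ already starts at a block border, so its offset is $0$; more importantly, knowing a text position and an offset does not tell you whether that occurrence is flanked by meta-characters whose suffix and prefix match your particular $\headStr$ and $\tailStr$. A single $\Rangequery$ on your coordinates cannot filter occurrences by left/right context. The paper's construction is essentially forced here: at state $[u]$ it stores pairs $(\revert{A},X)$, where $A$ is a meta-character that immediately precedes some occurrence of $\longest{u}$ and $X$ is the label of an outgoing edge; then ``does $\headStr\,u\,\tailStr$ occur?'' becomes a genuine orthogonal range query over $hr(\headStr)\times tr(\tailStr)$. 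Your encoding also undermines the $O(n)$ space bound: the total number of (state, occurrence) pairs is $\sum_{[u]}|\Endpos(\longest{u})|$, which can be $\Theta(n^2)$. The paper obtains $O(n)$ points only by charging each $(\revert{A},X)$ to a secondary edge or an incoming suffix link of a primary child (Lemma~\ref{lem:PointsSize}); nothing analogous is available for your points.

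\textbf{The time analysis is not an amortization.} You assert $O(\log N)$ cost per factor by ``sharing work between consecutive factors'' and ``walking suffix links to shrink the body''. Neither claim is substantiated. The paper does not achieve $O(\log N)$ per factor; it achieves $O(|f_i|\log N)$ per factor, and even that requires a non-trivial argument: for each of the $r$ offsets $m$ one traverses the DAWG to find $\bodyStr_m(f_i^m)$ in $O((|f_i^m|/r)\log n)$ time, and then the tail computations across all $m$ are bounded jointly by maintaining a running maximum $F_m$ so that each additional $\Rangequery$ either increases $F_m$ or terminates that offset. Your ``single range query per alignment'' plus ``constant-time head/tail comparison'' cannot work, because for a fixed alignment the body length is unknown a priori and must be discovered by walking edges.

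Finally, you do not handle the case $|f_i|<r$ at all (the body is empty and the DAWG gives you nothing), nor the retrieval of a previous occurrence when the short factor lies strictly inside a meta-block; the paper needs a separate bit-vector mechanism (Lemma~\ref{lem:shorter_case}) and a dedicated tree over meta-characters (Lemma~\ref{lemma:getmetablock}) for these. Self-referencing factors (the DAWG may be ``ahead'' of $l_i$) also require the bound of Lemma~\ref{lem:Factor_subsequent}, which you do not address.
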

By on-line, we assume that the input string $S$ is given $r$
characters at a time, and we are to compute the s-factorization of the string
$S[1..jr]$ for all $j = 1, \ldots, n$.
Since only the last factor can change for each $j$,
the whole s-factorization need not be re-calculated so
we will focus on describing how to compute each s-factor $f_i$
by extending $f_i$ while a previous occurrence exists.
We show how to maintain dynamic data structures
using $O(N\log\sigma)$ bits in $O(N\log N)$ total time that
allow us to
(1) determine whether $|f_i| < r$ in $O(1)$ time, and if
so, compute $f_i$ in $O(|f_i|\log N)$ time (Lemma~\ref{lem:shorter_case}),
(2) compute $f_i$ in $O(|f_i|\log N)$
time when $|f_i| \geq r$ (Lemma~\ref{lem:longer_case}), and
(3) retrieve a previous occurrence
of $f_i$ in $O(|f_i|\log N)$ time (Lemma~\ref{lem:previousoccurrence}).
Since $\sum_{i=1}^z |f_i| = N$, these three lemmas prove Theorem~\ref{theorem:maintheorem}.

The difference between our algorithm and that of Starikovskaya
can be summarized as follows:
For (1), we show that a dynamic succinct bit-array that supports
rank/select queries and flip operations
can be used, as opposed to a suffix trie employed
in~\cite{starikovskaya12:_comput_lempel_ziv_factor_onlin}.
This allows our algorithm to use a larger meta-character size
of $r = \log_\sigma N$ instead of $\frac{\log_\sigma N}{4}$
in~\cite{starikovskaya12:_comput_lempel_ziv_factor_onlin},
where the 1/4 factor was required
to keep the size of the suffix trie within $O(N\log\sigma)$ bits.
Hence, our algorithm can pack characters more efficiently into a word.
For (2), we show that by using a DAWG on the meta-string of length $n
= N / r$ that occupies only $O(N \log \sigma)$ bits,
we can reduce the problem of finding valid extensions of a factor
to dynamic orthogonal range reporting queries,
for which a space efficient dynamic data structure
with $O(\log n)$ time query and update exists~\cite{blelloch08:_space}.
In contrast, Starikovskaya's algorithm uses a suffix tree on the meta-string
and dynamic wavelet trees requiring $O(\log^2 n)$ time for queries and
updates, which is the bottleneck of her algorithm.
For (3), we develop an interesting technique for the case
$|f_i| < r$ which may be of independent interest.

In what follows, let $l_i = \sum_{k=1}^{i-1}{|f_k|}$.
Although our presentation assumes that $N$ is known,
this can be relaxed at the cost of a
constant factor by simply %
restarting the entire algorithm when the length of the input string
doubles.

\subsection{Algorithm for $|f_i| < r$}
\label{subsec:smallerthanr}
Consider a bit array $\PM_k[1..N]$.
For any meta-character $A \in \Sigma^r$, let
$\PM_k[A] = 1$ iff $S[l+1 .. l+r] = A$ for some $0 \leq l \leq k-r$,
i.e., $\PM_k[A]$ indicates whether $A$ occurs as a substring in $S[1..k]$.
For any short string $t$ ($|t| < r$), 
let $D_{t}$ and $U_{t}$ be, respectively, the lexicographically smallest and largest
meta-characters having $t$ as a prefix,
namely, 
the bit-representation\footnote{
  Assume that $0^{\log N}$ and $1^{\log N}$ correspond
  to meta-characters $1$ and $N$, respectively.} of $D_t$ is the concatenation of
the bit-representation of $t$ and $0^{(r-|t|)\log\sigma}$,
and 
the bit-representation of $U_t$ is the concatenation of
the bit-representation of $t$ and $1^{(r-|t|)\log\sigma}$.
These representations can be obtained from $t$ in constant time
using standard bit operations.
Then, the set of meta-characters that have $t$ as a prefix 
can be represented by the interval $tr(t) = [D_t,U_t]$.
It holds that $t$ occurs in $S[1..k-r+|t|]$ iff
some element in $\PM_k[D_{t}..U_{t}]$ is 1,
i.e. $\NumSetBits(\PM_k,D_{t},U_{t}) > 0$.
Therefore, 
we can check whether or not a string of length up to $r$ occurs
at some position $p \leq l_i$ by using $\PM_{l_i+r-1}$. 

For any $0 \leq m \leq r$, let $t_m = S[l_i+1..l_i+m]$.
We have that $|f_i| < r$ iff $\PM_{l_i+r-1}[t_r] = 0$,
which can be determined in $O(1)$ time.
Assume $|f_i| < r$,
and let
$m_i = \max\{ m \mid 0 \leq m < r, 
  \NumSetBits(\PM_{l_i+r-1},D_{t_m},U_{t_m}) > 0
  \}$,
where $m_i=0$ indicates that $S[l_i+1]$ does not occur in $S[1..l_i]$.
From the definition of s-factorization, we have that $|f_i| = \max(1, m_i)$.
Notice that $m_i$ can be computed by $O(|f_i|)$ rank queries
on $\PM_{l_i+r-1}$, due to the monotonicity of
$\NumSetBits(\PM_{l_i+r-1},D_{t_m}, U_{t_m})$ 
for increasing values of $m$.
To maintain $\PM_k$
we can use rank/select dictionaries for a dynamic bit array of length
$N$~(e.g.~\cite{raman01:_succin_dynam_data_struc}) mentioned in Section~\ref{sec:preliminaries}.
Thus we have:

\begin{lemma} \label{lem:shorter_case}
  We can maintain in $O(N\log N)$ total time,
  a dynamic data structure occupying $N + o(N)$ bits of space
  that allows whether or not $|f_i| < r$ to be determined in $O(1)$
  time, and if so, $f_i$ to be computed in $O(|f_i|\log N)$ time.
\end{lemma}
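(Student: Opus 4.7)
The plan is to realize $\PM_k$ as a dynamic bit array of length $N$ on top of the Raman et al.\ rank/select/flip dictionary cited in Section~\ref{sec:preliminaries}, occupying $N+o(N)$ bits and supporting rank/select/flip in $O(\log N)$ time together with $O(1)$ direct bit access. I will maintain the invariant that, at the moment I begin processing factor $f_i$, the array is exactly $\PM_{l_i+r-1}$.

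For the amortized maintenance cost, I observe that each of the $N$ bits of $\PM$ can flip only once over the entire execution (from $0$ to $1$). Whenever a meta-character $A = S[l+1..l+r]$ becomes visible (its ending position $l+r$ has been read), I extract $A$ as an integer in $O(1)$ using the constant-time meta-character access assumed in Section~\ref{sec:preliminaries}, look up $\PM[A]$ in $O(1)$, and flip it only if it is still $0$. This yields at most $N$ successful flips at $O(\log N)$ each plus $O(N)$ cheap skips, for a total of $O(N\log N)$ maintenance time.

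Given the invariant, deciding whether $|f_i|<r$ reduces to extracting $t_r=S[l_i+1..l_i+r]$ in one word operation and testing $\PM[t_r]$, which is a single $O(1)$ bit access: the answer is yes iff this bit is $0$. When it is $0$, I compute $m_i$ greedily by sweeping $m=1,2,\dots$: for each $m$ I update $D_{t_m}$ and $U_{t_m}$ in $O(1)$ using the bit-string formulas already stated before the lemma, and evaluate $\NumSetBits(\PM,D_{t_m},U_{t_m})>0$ with two rank queries at cost $O(\log N)$. Because $t_m$ is a prefix of $t_{m+1}$, the ranges $[D_{t_m},U_{t_m}]$ are nested and shrinking in $m$, so the counts are monotone non-increasing; I stop at the first $m$ whose count is $0$, yielding $m_i=m-1$, $|f_i|=\max(1,m_i)$, and at most $|f_i|+1$ iterations, for $O(|f_i|\log N)$ time. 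The factor itself is read directly from $S$.

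The only delicate point I foresee is justifying $O(1)$-time bit access alongside $O(\log N)$-time flip within $N+o(N)$ bits; this is standard for the cited dynamic bit-vector, where the underlying bit string is stored in a directly word-addressable form while an auxiliary $o(N)$-bit index supports rank, select and updates. With that plumbing in place, the three claims fall out of the bookkeeping above.
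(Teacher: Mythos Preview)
Your proposal is correct and follows essentially the same approach as the paper: both use the dynamic bit array $\PM_k$ with the Raman et al.\ rank/select dictionary, test $|f_i|<r$ via the single bit $\PM_{l_i+r-1}[t_r]$, and compute $m_i$ by $O(|f_i|)$ rank queries exploiting the monotonicity of $\NumSetBits$ over the nested intervals $[D_{t_m},U_{t_m}]$. Your write-up is in fact more explicit than the paper's on two points it glosses over: the $O(N\log N)$ maintenance bound (each of the $O(N)$ windows triggers at most one flip), and the justification for $O(1)$ bit access within the $N+o(N)$-bit budget.
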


\subsection{Algorithm for $|f_i| \geq r$.}
To compute $f_i$ when $|f_i| \geq r$,
we use the DAWG for the meta-string $\Meta{S}$ which we call the
{\em {\PackedDAWG}}.
While the DAWG for $S$ requires $O(N\log N)$ bits, the {\PackedDAWG}
only requires $O(N\log\sigma)$ bits. However, the complication is that
only substrings with occurrences that start at block borders can
be traversed from the source of the {\PackedDAWG}.
In order to overcome this problem, we will augment the {\PackedDAWG}
and maintain the set
$\Points_{[u]} = \{(\revert{A}, X) \mid ([u], X, [uX]) \in E, A \longest{u}X \in \Substr(\Meta{S}) \}$
for all states $[u]$ of the {\PackedDAWG}.
A pair $(\revert{A},X) \in \Points_{[u]}$ represents that 
there exists an occurrence of $A\longest{u}X$ in $\Meta{S}$,
in other words, the longest element $\longest{u}$ corresponding to the state can be extended by $X$ and still
have an occurrence in $\Meta{S}$ immediately preceded by $A$. 

\begin{lemma} \label{lem:PointsSize}
  For meta-string $S$ and its {\PackedDAWG} $(V,E,G)$,
  the the total number of elements in
  $\Points_{[u]}$ for all states $[u] \in V$ is $O(\|\Meta{S}\|)$.
\end{lemma}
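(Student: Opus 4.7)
The plan is to bound the total size of the $\Points_{[u]}$ sets by injecting the disjoint union $\bigsqcup_{[u] \in V} \Points_{[u]}$ into the edge set $E$ of the packed DAWG of $\Meta{S}$. By Theorem~\ref{theorem:orig_dawg} we already know $|E| = O(\|\Meta{S}\|)$, so producing such an injection will finish the proof.

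The candidate map is $\phi \colon (\revert{A}, X) \in \Points_{[u]} \mapsto ([A\longest{u}], X, [A\longest{u}X])$. Before arguing injectivity, I would check that $\phi$ actually lands in $E$: the defining condition $A\longest{u}X \in \Substr(\Meta{S})$ gives both $A\longest{u}$ and $A\longest{u}X$ as substrings, and the two endpos sets are distinct (the leftmost end-position of $A\longest{u}$ cannot appear in $\Endpos(A\longest{u}X)$ because that would require an even earlier end-position for $A\longest{u}$). So the edge really exists in $E$.

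The hard part, and the main obstacle, will be injectivity. To recover $(\revert{A}, X, [u])$ from the image edge $(v, X, v')$, I would lean on the standard structural fact about suffix-automaton equivalence classes: the members of $v$ are precisely the suffixes of $\longest{v}$ whose lengths lie in the range $(|\longest{\sufNode(v)}|,\,|\longest{v}|]$, so they form a consecutive-length family of nested suffixes. Now $A\longest{u} \in v$, and by the very definition of the DAWG, $\longest{u}$ is the longest representative of \emph{its own} equivalence class. Thus $A\longest{u}$ cannot be any member of $v$ longer than the shortest one: if it were, then $\longest{u} = (A\longest{u})[2..|A\longest{u}|]$ would be a strictly shorter suffix of $\longest{v}$ still lying in $v$, hence not the longest of its class. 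This forces $A\longest{u}$ to be the shortest member of $v$, which in turn pins down $\longest{u}$ to coincide with $\longest{\sufNode(v)}$ (both are suffixes of $\longest{v}$ of length $|\longest{\sufNode(v)}|$) and $A$ to be the single character of $\longest{v}$ immediately preceding that suffix. So $[u] = \sufNode(v)$ and $\revert{A}$ are uniquely determined by the image edge, yielding injectivity.

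Combining injectivity with Theorem~\ref{theorem:orig_dawg} closes the argument: $\sum_{[u] \in V} |\Points_{[u]}| \le |E| = O(\|\Meta{S}\|)$.
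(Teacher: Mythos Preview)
Your injection argument is correct, and it is a genuinely different route from the paper's. The paper does not inject into $E$; instead it splits on whether the edge $([u],X,[uX])$ is primary or secondary. If secondary, there is a \emph{unique} meta-character $A$ with $A\longest{u}X \equiv \longest{u}X$, so the point is charged to that secondary edge. If primary, then $\longest{u}X = \longest{uX}$ and each distinct preceding $A$ corresponds to an incoming suffix link $([A\longest{uX}],A,[uX])$, so the point is charged to that suffix link. Since every state has exactly one incoming primary edge, summing over all $[u]$ gives exactly (\#secondary edges) + (\#suffix links) $= O(\|\Meta{S}\|)$.

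Your approach is arguably cleaner for the bare counting statement: a single injection into $E$, no case split, and the key structural fact you use (that $A\longest{u}$ must be the shortest member of $[A\longest{u}]$, so the suffix link of the image state recovers both $A$ and $[u]$) is exactly right. The paper's decomposition, on the other hand, buys something beyond this lemma: the association of points with secondary edges and incoming suffix links mirrors precisely the update operations in Blumer et al.'s on-line construction, which is what drives the next lemma on maintaining $\Points_{[u]}$ dynamically with $O(n)$ total insertions. So your proof is tighter here but would need to be redone for the dynamic maintenance argument, whereas the paper's charging scheme is reused verbatim.
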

\begin{proof}
  Consider edge $([u],X,[uX])\in E$.
  If $\longest{u}X \neq \longest{uX}$, 
  i.e., the edge is secondary,
  it follows that there exists a unique meta-character 
  $A = \Meta{S}[\minpos_{[uX]}-\|\longest{u}X\|]$
  such that $A\longest{u}X\equiv_{\Meta{S}} \longest{u}X$,
  namely, any occurrence of $\longest{u}X$ is always preceded by $A$ in $\Meta{S}$.
  If $\longest{u}X = \longest{uX}$, i.e. the edge is primary,
  then,
  for each distinct meta-character $A$ preceding an occurrence of
  $\longest{u}X = \longest{uX}$ in $\Meta{S}$,
  there exists a suffix link
  $([A\longest{uX}], A, [\longest{uX}]) \in G$.
  Therefore, each point $(\revert{A},X)$ in $\Points_{[u]}$
  can be associated to either a secondary edge from $[u]$
  or one of the incoming suffix links to its primary child $[uX]$.
  (See also Fig.~\ref{fig:points} in Appendix A.)
  Since each state has a unique longest member,
  each state has exactly one incoming primary edge.
  Therefore, the total number of elements in $\Points_{[u]}$ for \emph{all} states $[u]$ 
  is equal to the total number of secondary edges and suffix links,
  which is $O(\|\Meta{S}\|)$.
  \qed
\end{proof}

\begin{lemma} \label{lem:Points}
    For string $S\in\Sigma^*$ of length $N$,
    we can, in $O(N\log \sigma)$ total time and bits of space and in an on-line manner,
    construct the {\PackedDAWG} $(V,E,G)$ of $S$ as well as maintain
    $\Points_{[u]}$ for all states $[u]\in V$ so that
    $\Rangequery(\Points_{[u]},$ $I_h, I_t)$ for an orthogonal
    range $I_v\times I_h$ can be answered in $O(\log n)$ time.
\end{lemma}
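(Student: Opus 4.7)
The plan is to combine the on-line DAWG construction of Theorem~\ref{theorem:orig_dawg}, applied to the meta-string $\Meta{S}$, with the dynamic 2D range structure of Theorem~\ref{theorem:blelloch}, one instance per state, and then charge every update to the $\Points$ sets to a secondary-edge or suffix-link event whose total count is $O(n)$ by Lemma~\ref{lem:PointsSize}.

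First I would build the {\PackedDAWG} directly by invoking Theorem~\ref{theorem:orig_dawg} on $\Meta{S}$, treating $\Sigma^r$ as an integer alphabet of size $N$. Since $\|\Meta{S}\|=n=N/r$ and $\log|\Sigma^r|=\log N$, on-line construction takes $O(n\log N)=O(N\log\sigma)$ time and $O(n\log n)=O(N\log\sigma)$ bits (using $n\log N=N\log\sigma$). I would also store $\minpos_{[u]}$ and $|\longest{u}|$ at every state so that, given any edge $([u],X,[uX])$, I can decide in $O(1)$ time whether it is primary or secondary, and in the secondary case recover the unique preceding meta-character $A=\Meta{S}[\minpos_{[uX]}-\|\longest{u}X\|]$.

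Next, at each state $[u]$ I maintain an instance of Blelloch's dynamic 2D structure (Theorem~\ref{theorem:blelloch}) holding $\Points_{[u]}$. By Lemma~\ref{lem:PointsSize}, $\sum_{[u]\in V}|\Points_{[u]}|=O(n)$ at every moment, so the overall space is $O(n\log n)=O(N\log\sigma)$ bits, and any $\Rangequery$ costs $O(\log n)$. The updates are driven by exactly two kinds of events during the standard on-line DAWG construction: (i) when a secondary edge $([u],X,[uX])$ is created, I insert $(\revert{A},X)$ into $\Points_{[u]}$ using the $A$ computed from $\minpos_{[uX]}$; (ii) when a suffix link $([A\longest{uX}],A,[\longest{uX}])$ is created into a state $[uX]$ reached from $[u]$ by a primary $X$-edge, I insert $(\revert{A},X)$ into $\Points_{[u]}$.

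The subtle step, and the one I expect to be the main obstacle, is the cloning that occurs in on-line DAWG construction. When a state $[w]$ is split into a clone $[w']$, some incoming suffix links to $[w]$ are redirected to $[w']$, the in-edge that forced the clone turns from secondary into primary, and the $\minpos$ of the clone differs from that of the original, so even the $A$-coordinate of previously stored points can change. I would process cloning as follows: delete from the affected parents' $\Points$ sets every point corresponding to a suffix link or secondary edge that is about to be removed or rewritten, perform the topological clone as in Blumer et al., and then reinsert the points that the new configuration dictates, computing their $A$-coordinates from the $\minpos$ values of the new endpoints. A careful accounting shows that the work done per clone is proportional (up to an $O(\log n)$ factor) to the number of suffix links and edges involved, whose total over the whole construction is $O(n)$; hence $\Points$ maintenance costs $O(n\log n)=O(N\log\sigma)$ time overall, matching the budget of the DAWG construction itself and completing the proof.
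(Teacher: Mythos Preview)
Your overall architecture matches the paper's exactly: build the packed DAWG via Theorem~\ref{theorem:orig_dawg} on $\Meta{S}$, attach a Blelloch structure per state, and charge point insertions to secondary edges and incoming suffix links as in Lemma~\ref{lem:PointsSize}. Your events (i) and (ii) correspond to the paper's cases (a-2) and (a-3). The only divergence is in the treatment of state splitting, and there your proposal is more complicated than necessary and rests on two mistaken premises.

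First, when $w$ is split into $[v_1B]$ and the clone $[v_iB]$, no incoming suffix links to $w$ are redirected; only $w$'s \emph{outgoing} suffix link is rewritten (to point at the clone), and the clone inherits $w$'s old outgoing link. Second, $\minpos$ of the clone equals $\minpos$ of the original, since the only new end-position contributed by the update is $\|uB\|$, which is maximal; hence the $A$-coordinates of previously stored points never change.

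The paper exploits this to show that edge redirection during a split leaves every $\Points_{[v_j]}$ literally unchanged: in particular, the point in $\Points_{[v_i]}$ formerly charged to the secondary edge $([v_i],B,w)$ is, after the split, exactly the point that should be charged to the new suffix link from $[v_1B]$ to $[v_iB]$. Thus no deletions ever occur; the only insertions at a split are the secondary-edge points for the clone's copied outgoing edges (case (b-2)), already within Blumer et al.'s $O(n)$ budget. Your delete/reinsert scheme could in principle be made to fit the stated bounds, but the ``careful accounting'' you defer is precisely what the paper carries out, and its conclusion is that the deletions you anticipate are vacuous.
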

\begin{proof}
  It follows from Theorem~\ref{theorem:orig_dawg}
  that the {\PackedDAWG} can be computed in an on-line manner,
  in $O(N \log \sigma)$ time and bits of space,
  since the size of the alphabet for meta-strings is $O(N)$ and the
  length of the meta-string is $n = \frac{N}{r}$.
  To maintain and support $\Rangequery$ queries on $\Points$
  efficiently, we use the dynamic data structure by
  Blelloch~\cite{blelloch08:_space} mentioned in
  Theorem~\ref{theorem:blelloch}.
  Thus from Lemma~\ref{lem:PointsSize}, the total space requirement is
  $O(N\log\sigma)$ bits.
  Since each insert operation can be performed in amortized
  $O(\log n)$ time (no elements are deleted in our algorithm),
  what remains is to show that the total number of insert operations
  to $\Points$ is $O(n)$.
  This is shown below by a careful analysis of the on-line DAWG construction
  algorithm~\cite{blumer85:_small_autom_recog_subwor_text}.
  (See Algorithm~\ref{algorithm:onlineDAWGconstruction} in Appendix
  B. for pseudo-code.)

  Assume we have the {\PackedDAWG} for a prefix $u = \Meta{S}[1..\|u\|]$
  of meta-string $\Meta{S}$.
  Let $B = \Meta{S}[\|u\|+1]$ be the meta-character that follows $u$ in $\Meta{S}$. 
  We group the updates performed on the {\PackedDAWG} when adding $B$,
  into the following two operations: %
  (a) the new sink state $[uB]$ is created, and
  (b) a state is split.

 First, consider case (a).
 Let $u_0 = u$, and 
 consider the sequence $[u_1], \ldots, [u_q]$ of states
 such that the suffix link of $[u_{j}]$ points to $[u_{j+1}]$ for $0 \leq j < q$,
 and $[u_q]$ is the first state in the sequence which has an out-going edge
 labeled by $B$.
 Note that any element of $[u_{j+1}]$ is a suffix of any element of $[u_{j}]$.
 The following operations are performed. (See also
 Fig.~\ref{fig:secondary_edges} in Appendix A.)
 (a-1)
 The primary edge from the old sink $[u]$ to the new sink $[uB]$ is created.
 No insertion is required for this edge since $[uB]$
 has no incoming suffix links.
 (a-2)
 For each $1 \leq j < q$
 a secondary edge $([u_j], B,[uB])$ is created,
 and the pair $(\revert{C_j}, B)$ is inserted to $\Points_{[u_j]}$,
 where $C_j$ is the unique meta-character that immediately precedes 
 $\longest{u_j}B$ in $uB$,
 i.e., $C_j = \Meta{uB}[pos_{[uB]}-\|\longest{u_{j}}B\|]$.
 (a-3)
 Let $([u_q],B,w)$ be the edge with label $B$ from state $[u_q]$.    
 The suffix link of the new sink state $[uB]$ is created
 and points to $w$.
 Let $e = ([v], B, w)$ be the primary incoming edge to $w$,
 and $A$ be the meta-character that labels the suffix link
 (note that $[v]$ is not necessarily equal to $[u_q]$).
 We then insert a new pair $(\revert{A}, B)$ %
 into $\Points_{[v]}$. 

Next, consider case (b).
After performing (a), node $w$ is split if the edge $([u_q],B,w)$ is secondary.
Let $[v_1] = [v]$, and 
let $[v_1], \ldots, [v_k]$ be the parents of the state $w$ of 
the {\PackedDAWG} for $u$,
sorted in decreasing order of their longest member.
Then, it holds that there is a suffix link from $[v_{h}]$ to $[v_{h+1}]$
and any element of $[v_{h+1}]$ is a suffix of any element of $[v_{h}]$
for any $1 \leq h < k$.
Assume $\longest{v_i}B$ is the longest suffix of $uB$
that has another (previous) occurrence in $uB$.
(Namely, $[v_i]$ is equal to the state $[u_q]$ of (a-2) above.)
If $i > 1$, then the state $w$ is split into two states 
$[v_1 B]$ and $[v_{i} B]$ such that $[v_1 B] \cup [v_i B] = w$ and
any element of $[v_i B]$ is a proper suffix of any element of $[v_1 B]$.
The following operations are performed.
(See also Fig.~\ref{fig:state_separation} in Appendix A.)
(b-1)
    The secondary edge from $[v_i]$ to $w$ becomes the primary edge to $[v_i B]$,
    and for all $i < j \leq k$ 
    the secondary edge from $[v_j]$ to $w$ becomes a secondary edge to $[v_j B]$.
    The primary and secondary edges from $[v_h]$ to $w$ for all $1 \leq h < i$ 
    become the primary and secondary ones from $[v_h]$ to $[v_1 B]$, respectively.
    Clearly the sets $\Points_{[v_h]}$ for all $1 \leq h < i$ are unchanged.
    Also, since any edge $([v_j], B, [v_i B])$ are all secondary,
    the sets $\Points_{[v_j]}$ for all $i < j \leq k$ are unchanged.
    Moreover, the element of $\Points_{[v_i]}$ that was associated to
    the secondary edge to $w$, is now associated to the suffix link from $[v_1 B]$ to $[v_i B]$.
    Hence, $\Points_{[v_i]}$ is also unchanged.
    Consequently, there are no updates due to edge redirection.
(b-2)
    All outgoing edges of $[v_1 B]$ are
    copied as outgoing edges of $[v_i B]$.
    Since any element of $[v_i B]$ is a suffix of any element of $[v_1 B]$,
    the copied edges are all secondary.
    Hence, we insert a pair to $\Points_{[v_i B]}$ for each secondary edge,
    accordingly.

  Thus, the total number of insert operations to $\Points$
  for all states is linear in the number of update operations during the on-line construction
  of the {\PackedDAWG}, which is $O(n)$ due to~\cite{blumer85:_small_autom_recog_subwor_text}.
  This completes the proof.
 \qed
\end{proof}

For any string $f$ and integer $0 \leq m \leq \min(|f|,r-1)$,
let strings $\headStr_m(f)$, $\bodyStr_m(f)$, $\tailStr_m(f)$ satisfy
$f = \headStr_m(f) \bodyStr_m(f) \tailStr_m(f)$,
$|\headStr_m(f)| = m$,
and $|\bodyStr_m(f)| = j^\prime r$ where 
$j^\prime = \max\{ j\geq 0 \mid m + jr \leq |f|\}$.
We say that an occurrence of $f$ in $S$ has offset $m$~($0 \leq m \leq r-1$), 
if, in the occurrence,
$\headStr_m(f)$ corresponds to a suffix of a meta-block,
$\bodyStr_m(f)$ corresponds to a sequence of meta-blocks
(i.e. $\bodyStr_m(f) \in \Substr(\Meta{S})$), and
$\tailStr_m(f)$ corresponds to a prefix of a meta-block.

Let $f_i^m$ denote the longest prefix of $S[l_i+1..N]$ 
which has a previous occurrence in $S$ with offset $m$.
Thus, $|f_i| = \max_{0 \leq m < r}|f_i^m|$.
In order to compute $f_i^m$, the idea is to 
find the longest prefix $u$ of 
meta-string $\Meta{\bodyStr_m(S[l_i+1..N])}$ 
that can be
traversed from the source of the {\PackedDAWG} while assuring that at
least one occurrence of $u$ in $\Meta{S}$ is immediately preceded by
a meta-block that has $\headStr_m(S[l_i+1..N])$ as a suffix.
It follows that $u = \bodyStr_m(f_i^m)$.

\begin{lemma}\label{lem:longestprefix}
  Given the augmented {\PackedDAWG} $(V,E,G)$ of Lemma~\ref{lem:Points} of meta-string $\Meta{S}$,
  the longest prefix $f$ of any string $P$ that has an occurrence with
  offset $m$ in $S$ can be computed in $O(\frac{|f|}{r}\log n + r \log n)$ time.
\end{lemma}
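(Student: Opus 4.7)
The plan is to walk the {\PackedDAWG} from the source, consuming the meta-characters of $\Meta{\bodyStr_m(P)}$ one at a time, using range queries on $\Points$ to test at each step whether the prefix of $P$ traversed so far admits an occurrence of offset $m$---an occurrence in $\Meta{S}$ whose immediately preceding meta-block ends with $\headStr_m(P)=P[1..m]$. After the walk halts, a short sweep determines the length of $\tailStr_m(f)$.

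\emph{Setup and body traversal.} First I build in $O(1)$ time, using the same bit-trick as in Section~\ref{subsec:smallerthanr}, the interval $I_h = [D_{\revert{\headStr_m(P)}}, U_{\revert{\headStr_m(P)}}]$ of integer-encoded meta-characters $\revert{A}$ whose bit representation begins with $\revert{\headStr_m(P)}$, i.e., the meta-characters $A$ having $\headStr_m(P)$ as a suffix. The condition $\revert{A}\in I_h$ is exactly the constraint on the meta-block preceding an offset-$m$ occurrence of the body. Starting at the source, at current state $[u]$ with next meta-character $B$ of $\Meta{\bodyStr_m(P)}$, I issue $\Rangequery(\Points_{[u]}, I_h, [B,B])$: if it returns a point $(\revert{A}, B)$, the definition of $\Points_{[u]}$ supplies an occurrence of $\longest{u}B$ in $\Meta{S}$ preceded by $A\in I_h$, giving a valid offset-$m$ extension of the traversed prefix by $B$, so I advance to $[uB]$; if it returns $\nil$ the walk halts. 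Each step costs $O(\log n)$ by Lemma~\ref{lem:Points}, over at most $O(|f|/r)$ steps.

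\emph{Tail and complexity.} Let $[u]$ be the final state and $k$ the number of meta-characters consumed. The tail $\tailStr_m(f)$ is the longest prefix of $P[m+kr+1..]$ of length less than $r$ that is also a prefix of some meta-character $C$ for which the edge $([u],C,[uC])\in E$ admits a preceding meta-character in $I_h$. I iterate $t=1,2,\ldots,r-1$, form $I_t=[D_{P[m+kr+1..m+kr+t]},U_{P[m+kr+1..m+kr+t]}]$ in $O(1)$, and issue $\Rangequery(\Points_{[u]}, I_h, I_t)$; since $I_t$ shrinks monotonically with $t$, I stop at the first $\nil$ and return the largest $t$ that yielded a point. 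This amounts to at most $r$ queries of $O(\log n)$ each, so $O(r\log n)$; summing with the body cost gives the claimed $O((|f|/r)\log n + r\log n)$.

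\emph{Main obstacle.} The hardest part is reconciling the fact that $\Points_{[u]}$ records preceding meta-blocks of $\longest{u}$-anchored extensions with the need to constrain preceding meta-blocks of the literally traversed prefix $u'$, which may be a proper suffix of $\longest{u}$. When $\|u'\|<\|\longest{u}\|$ the preceding meta-block of $u'$ is in fact uniquely determined, equal to $\Meta{S}[\minpos_{[u]}-\|u'\|]$, so in that case the range query is replaced by a direct $O(1)$ check that this fixed meta-character lies in $I_h$, leaving the asymptotic bound unchanged.
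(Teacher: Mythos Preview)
Your approach is the paper's: traverse the {\PackedDAWG} on $\Meta{\bodyStr_m(P)}$, using $\Rangequery$ on $\Points_{[u]}$ against the head interval while only primary edges have been taken, and falling back to a direct edge check once a secondary edge has been traversed; then sweep for the tail. Two small points deserve tightening.

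First, the case $m=0$ is not covered by your range-query scheme. With $\headStr_0(P)=\varepsilon$ you get $I_h=[1,N]$, but an occurrence of the body starting at meta-position $1$ has \emph{no} preceding meta-character at all, hence contributes no point to $\Points_{[u]}$; your query can return $\nil$ even though the extension is valid. The paper avoids this by declaring the $m=0$ traversal ``trivial'' (just follow DAWG edges, no head constraint) and only invoking $\Points$ for $m>0$.

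Second, in your ``main obstacle'' paragraph you say that when $\|u'\|<\|\longest{u}\|$ the range query is replaced by an $O(1)$ membership test of the fixed preceding meta-character in $I_h$. That test is correct (and in fact always passes, since you reached this regime via a successful $\Rangequery$ that already pinned the unique preceding meta-character inside $I_h$), but by itself it tells you nothing about whether you can extend by $B$, nor how to compute the tail. The paper makes both explicit: for the body step, simply check $([u],X,[uX])\in E$; for the tail, locate the predecessor and successor $B^-,B^+$ of $B$ among the outgoing edge labels (maintained in a balanced BST) and return the longer of the two lcp's, in $O(\log n+|\tailStr_m(f)|)$ time. Your iterative $I_t$ sweep also works here provided you query the edge-label BST rather than $\Points_{[u]}$, at cost $O(r\log n)$, which still meets the stated bound.
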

\begin{proof}
We first traverse the {\PackedDAWG} for $\Meta{S}$ to find
$\bodyStr_m(f)$. This traversal is trivial for $m=0$, 
so we assume $m > 0$.
For any string $t$ ($|t| < r$),
let $L_t$ and $R_t$ be, respectively, the lexicographically smallest and largest 
meta-character which has $t$ as a suffix,
namely,
the bit-representation of $L_t$ is the concatenation of
$0^{(r-|t|)\log\sigma}$ and the bit-representation of $t$,
and 
the bit-representation of $R_t$ is the concatenation of
$1^{(r-|t|)\log\sigma}$ and the bit-representation of $t$.
Then, the set of meta-characters that have $\revert{t}$ as a prefix,
(or, $t$ as a suffix when reversed),
can be represented by the interval $hr(t) =
[\revert{L_t},\revert{R_t}]$.
Suppose we have successfully traversed the {\PackedDAWG}
with the prefix $u = \Meta{\bodyStr_m(P)}[1..\|u\|]$ and want to
traverse with the next meta-character $X = \Meta{\bodyStr_m(P)}[\|u\|+1]$.
If $u = \longest{u}$, i.e. only primary edges were traversed,
then there exists an occurrence of 
$\headStr_m(P)u X$ with offset $m$ in string $S$
iff
$\Rangequery(\Points_{[u]}, hr(\headStr_m(P)), [X,X]) \neq \mathbf{nil}$.
Otherwise, if $u \neq \longest{u}$,
all occurrences of $u$ (and thus all extensions of $u$ that can be
traversed)
in $\Meta{S}$ is already guaranteed to be
immediately preceded by the unique meta-character
$A = \Meta{S}[\minpos_{[u]}-\|u\|]$ such that
$\revert{A} \in hr(\headStr_m(P))$.
Thus, there exists an occurrence of
$\headStr_m(P)u X$ with offset $m$ in string $S$ iff $([u], X, [uX])\in E$.
We extend $u$ until $\Rangequery$ returns $\mathbf{nil}$ or no edge is
found,
at which point we have $\headStr_m(P)u = \headStr_m(f)\bodyStr_m(f)$.

Now, $\tailStr_m(f)$ is a prefix of
meta-character $B = \Meta{\bodyStr_m(P)}[\|\Meta{u}\|+1]$.
When $u = \longest{u}$,
we can compute $\tailStr_m(f)$ by asking
$\Rangequery(\Points_{[u]}, hr(\headStr_m(P)), tr(B[1..j]))$ for
$0\leq j < r$.
The maximum $j$ such that $\Rangequery$ does not return $\mathbf{nil}$ gives
$|\tailStr_m(f)|$.
If $u\neq\longest{u}$, $\tailStr_m(f)$ is the longest lcp between
$B$ and any outgoing edge from $[u]$.
This can be computed in $O(\log n + |\tailStr_m(f)|)$ time
by maintaining outgoing edges from $[u]$ in balanced binary search trees,
and finding the lexicographic predecessor/successor $B^-,B^+$ of
$B$ in these edges, and computing the lcp between them.
(See Fig.~\ref{fig:points} in Appendix.)
The lemma follows since each $\Rangequery$ query takes $O(\log n)$ time.
\qed
\end{proof}

From the proof of Lemma~\ref{lem:longestprefix},
$\bodyStr_m(f_i^m)$ can be computed in
$O(\frac{|f_i^m|}{r}\log n)$ time, 
and for all $0 \leq m  < r$, this becomes
$O(|f_i|\log n)$ time.
However, for computing $\tailStr_m(f_i^m)$,
if we simply apply the algorithm and use $O(r\log n)$ time for each $f_i^m$,
the total time for all $0\leq m < r$ would be
$O(r^2 \log n)$ which is too large for our goal.
Below, we show that all $\tailStr_m(f_i^m)$ are not required
for computing $\max_{0\leq m < r}|f_i^m|$,
and this time complexity can be reduced.

Consider computing $F_m = \max_{0\leq x \leq m}|f_i^x|$
for $m=0,\ldots, r-1$. We first compute
$\hat{f}_i^{m} = \headStr_m(f_i^m)\bodyStr_m(f_i^m)$ using
the first part of the proof of Lemma~\ref{lem:longestprefix}. 
We shall compute $\tailStr_m(f_i^m)$ only when
$F_m$ can be larger than $F_{m-1}$ i.e.,
$|\hat{f}_i^{m}| + |\tailStr_m(f_i^m)| > F_{m-1}$.
Since $|\tailStr_m(f_i^m)| < r$, 
this will never be the case if $|\hat{f}_i^{m}| \leq F_{m-1} - r + 1$,
and will always be the case if $|\hat{f}_i^{m}| > F_{m-1}$.
For the remaining case, i.e. $0 \leq F_{m-1} - |f_i^m| < r-1$,
$F_m > F_{m-1}$
iff $|\tailStr_m(f_i^m)| >  F_{m-1} - |\hat{f}_i^{m}|$.
If $u=\longest{u}$,
this can be determined by a single $\Rangequery$ query
with $j = F_{m-1} - |\hat{f}_i^{m}|+1$ in the last part of the proof
of Lemma~\ref{lem:longestprefix},
and if so, the rest of $\tailStr_m(f_i^{m})$ is computed using
the $\Rangequery$ query for increasing $j$.
When $u\neq\longest{u}$, whether or not the lcp between
$B$ and $B^-$ or $B^+$ is greater than $F_{m-1}-|\hat{f}_i^m|$ can be
checked in constant time using bit operations.

From the above discussion, each $\Rangequery$ or
predecessor/successor query for computing
$\tailStr_m(f_i^m)$ updates $F_m$,
or returns $\mathbf{nil}$.
Therefore, the total time for computing $F_{r-1} = |f_i|$ is
$O((r+|f_i|)\log n) = O(|f_i|\log n)$.

A technicality we have not mentioned yet, is when and to what
extent the {\PackedDAWG} is updated when computing $f_i$.
Let $F$ be the length of the current longest prefix of $S[l_i+1..N]$
with an occurrence less than $l_i+1$, found so
far while computing $f_i$.
A self-referencing occurrence of $S[l_i+1..l_i+F]$ can reach up to position $l_i+F-1$.
When computing $f_i$ using the {\PackedDAWG}, $F$ is increased by at most $r$ characters at a time.
Thus, for our algorithm to successfully detect such
self-referencing occurrences,
the {\PackedDAWG} should be built up to the meta-block that includes
position $l_i+F-1+r$ and updated when $F$ increases.
This causes a slight problem when computing $f_i^m$ for some $m$;
we may detect a substring which only has an occurrence larger than $l_i$
during the traversal of the DAWG.
However, from the following lemma,
the number of such {\em future} occurrences that update $F$ can be limited 
to a constant number, namely two, and hence by reporting up to three elements in
each $\Rangequery$ query that may update $F$,
we can obtain an occurrence less than $l_i+1$, if one exists.
These occurrences can be retrieved in $O(\log N)$ time in
this case, as described in Section~\ref{subsec:retrieving_occurrences}.
\begin{lemma} \label{lem:Factor_subsequent}
During the computation of $f_i^m$, there can be at most two future
occurrences of $f_i^m$ that will update $F$.
\end{lemma}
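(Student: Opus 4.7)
The plan is to combine two restrictions on the position $p$ of any future occurrence of $f_i^m$ that updates $F$: a geometric bound coming from the current extent of the {\PackedDAWG}, which confines $p$ to a short window immediately after $l_i$, and a modular constraint from the fixed offset $m$, which forces $p$ to lie on an arithmetic progression with common difference $r$. The two constraints together will leave at most two candidate positions.

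First I would bound the position $p$ using the DAWG coverage. When the current value of $F$ equals $F_{\mathrm{cur}}$, the {\PackedDAWG} has been built up to the end of the meta-block containing position $l_i + F_{\mathrm{cur}} - 1 + r$, so it covers at most up to position $l_i + F_{\mathrm{cur}} + 2r - 2$ (the extra $r - 1$ arises from rounding up to the meta-block boundary). A future occurrence at some position $p > l_i$ that updates $F$ from $F_{\mathrm{cur}}$ to some $F_{\mathrm{new}}$ matches a prefix of $S[l_i+1..N]$ whose endpoint is at $p + F_{\mathrm{new}} - 1$, and this endpoint must lie within the DAWG's coverage at the moment of the update. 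Combining $p + F_{\mathrm{new}} - 1 \leq l_i + F_{\mathrm{cur}} + 2r - 2$ with the fact that any update strictly increases $F$ (so $F_{\mathrm{new}} \geq F_{\mathrm{cur}} + 1$) gives $p \leq l_i + 2r - 2$, placing $p$ in the interval $[l_i + 1,\, l_i + 2r - 2]$.

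Next I would apply the offset constraint: the definition of offset $m$ requires the first $m$ characters of the occurrence to coincide with the last $m$ characters of a meta-block, so any valid position $p$ satisfies $p \equiv 1 - m \pmod{r}$. The intersection of this arithmetic progression of common difference $r$ with the interval $[l_i + 1,\, l_i + 2r - 2]$ of length less than $2r$ contains at most two elements, which proves the bound. The main obstacle is the DAWG-coverage bookkeeping: a naive bound of $l_i + F_{\mathrm{cur}} + r$ would force $p \leq l_i + r - 1$, leaving only a single position on the progression, so the extra $r - 1$ slack from meta-block rounding is essential and makes the bound of two tight.
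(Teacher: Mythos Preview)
Your proposal is correct and follows essentially the same approach as the paper: you bound the DAWG's coverage by $l_i + F + 2r - 2$ (the paper writes $s = \lceil (l_i + F + r - 1)/r\rceil r$ and uses $l_i + F + 2r - 2 \geq s > p_{m,k} + F$), then intersect with the arithmetic progression of offset-$m$ positions (the paper parameterizes these as $p_{m,k} = \lceil l_i/r\rceil r - m + 1 + kr$ and solves for $k \in \{0,1\}$). The only cosmetic difference is that you phrase the counting as ``at most two points of a step-$r$ progression in an interval of length $< 2r$'' whereas the paper derives an explicit inequality on $k$.
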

\begin{proof}
As mentioned above, the {\PackedDAWG} is built up to the 
meta string $\Meta{S[1..s]}$
where $s = \lceil \frac{l_i + F+r-1}{r}\rceil r$.
An occurrence of $f_i^m$ possibly greater than $l_i$ 
can be written as 
$p_{m,k} = \lceil \frac{l_i}{r}\rceil r - m + 1 + kr$, where
$k = 0,1,\ldots$. %
For the occurrence to be able to update $F$ and also be detected in the 
packed DAWG, 
it must hold that
$s > p_{m,k} + F$.
Since $l_i + F + 2r -2 \geq s > p_{m,k} + F \geq l_i - m + 1 + kr + F$,
$k$ should satisfy $(2-k)r \geq 1-m$, and thus can only be $0$ or $1$.\qed
\end{proof}

\begin{lemma}\label{lem:longer_case}
  We can maintain in a total of $O(N\log N)$ time,
  a dynamic data structure occupying $O(N\log\sigma)$
  bits of space that allows
  $f_i$ to be computed in $O(|f_i|\log N)$ time, when $|f_i| \geq r$.
\end{lemma}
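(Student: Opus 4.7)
The plan is to compose Lemmas~\ref{lem:Points} and~\ref{lem:longestprefix} with the offset-wise running-maximum strategy sketched in the paragraphs preceding this lemma. The base data structure is the augmented {\PackedDAWG} of Lemma~\ref{lem:Points}, built online in $O(N\log\sigma)$ total time and $O(N\log\sigma)$ bits, supporting $\Rangequery$ on each $\Points_{[u]}$ in $O(\log n)$ time. During the processing of $f_i$, I would extend this structure meta-character by meta-character so that it always covers all meta-blocks up to and including the one containing position $l_i + F + r - 1$, where $F$ is the currently confirmed length of the candidate prefix of $f_i$.

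Given the DAWG, $f_i$ is computed by iterating over offsets $m=0,\ldots,r-1$ and maintaining the running maximum $F_m=\max_{0\le x\le m}|f_i^x|$ as described above. For each $m$, the ``body'' $\hat f_i^m=\headStr_m(f_i^m)\bodyStr_m(f_i^m)$ is obtained by the primary traversal of Lemma~\ref{lem:longestprefix} in $O((|f_i^m|/r)\log n)$ time, and summing this over all offsets telescopes to $O(|f_i|\log n)$.

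The main obstacle is the tail: a direct application of Lemma~\ref{lem:longestprefix} would pay $O(r\log n)$ per offset, giving an unaffordable $O(r^2\log n)$ per factor. The remedy, already sketched above, is to skip the tail probe whenever $|\hat f_i^m|\le F_{m-1}-r+1$ and, in the remaining cases, to start the probe at threshold $j=F_{m-1}-|\hat f_i^m|+1$ and advance $j$ only while it strictly increases $F_m$; the analogous threshold on the lcp of the next meta-character with its lexicographic neighbours handles the case $u\neq\longest{u}$ in $O(1)$ time. Every tail query thus either strictly advances the running maximum or terminates the current offset, so the total tail cost across all offsets is $O((r+|f_i|)\log n)=O(|f_i|\log n)$.

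Finally, because the DAWG covers meta-blocks extending slightly past $l_i$, it may report occurrences of $f_i^m$ whose starting positions exceed $l_i$. Lemma~\ref{lem:Factor_subsequent} shows that at most two such ``future'' occurrences can ever update $F$, so instructing each $\Rangequery$ that could update $F$ to return up to three witnesses (still $O(\log n)$ per call by Theorem~\ref{theorem:blelloch}) suffices to find a legitimate occurrence with start at most $l_i$ whenever one exists; such a witness is then resolved in $O(\log N)$ time by the retrieval mechanism of Section~\ref{subsec:retrieving_occurrences}. Summing $O(|f_i|\log n)=O(|f_i|\log N)$ over the factors with $|f_i|\ge r$ and using $\sum_i|f_i|\le N$ gives the claimed $O(N\log N)$ total time, while the space stays at $O(N\log\sigma)$ bits since it is absorbed entirely by the packed DAWG and its dynamic range-reporting companion.
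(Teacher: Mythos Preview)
Your proposal is correct and follows essentially the same route as the paper: the lemma in the paper is stated as a summary of the preceding discussion, and you have reproduced exactly that discussion---the augmented {\PackedDAWG} of Lemma~\ref{lem:Points}, the per-offset body traversal of Lemma~\ref{lem:longestprefix} whose costs sum to $O(|f_i|\log n)$, the running-maximum trick that amortizes the tail probes to $O((r+|f_i|)\log n)$, and the use of Lemma~\ref{lem:Factor_subsequent} to discard the at most two spurious future occurrences via a constant-output $\Rangequery$. One cosmetic remark: the body-cost bound is not a telescoping sum but the direct estimate $\sum_{m=0}^{r-1}|f_i^m|/r \le r\cdot|f_i|/r = |f_i|$; the conclusion is the same.
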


\subsection{Retrieving a Previous Occurrence of $f_i$}
\label{subsec:retrieving_occurrences}
If $|f_i| \geq r$, let $f_i = f_i^m$,
$\revert{A} \in hr(\headStr_m(f_i))$,
$u = \bodyStr_m(f_i)$, and $X\in tr(\tailStr_m(f_i))$ where
$A$ and $X$ were found during the traversal of the packed DAWG.
We can obtain the occurrence of $f_i$ by simple arithmetic
on the ending positions stored at each state, i.e.,
from $\minpos_{[uX]}$ if $uX\neq\longest{uX}$ or $m = 0$,
from $\minpos_{[AuX]}$ otherwise.
State $[AuX]$ can be reached in $O(\log N)$ time
from state $[uX]$, by traversing the suffix link in the reverse direction.

For $|f_i| < r$, $f_i$ is a substring of a meta-character.
Let $A_i$ be one of the previously occurring meta-characters with
prefix $f_i$ for which
$\PM_{l_i+r-1}[A_i] = 1$, thus giving a previous occurrence of $f_i$.
$A_i$ can be any meta-character in the range
$tr(f_i) = [D_{t_m}, U_{t_m}]$ with
a set bit, so $A_i$ can be retrieved in $O(\log N)$ time by
$A_i = \Select(\PM_{l_i+r-1},\Rank(\PM_{l_i+r-1},U_{t_m}))$.
Unfortunately, we cannot afford to explicitly maintain 
previous occurrences for all $N$
meta-characters, since this would cost $O(N\log N)$ bits of space.
We solve this problem in two steps.

First, consider the case that a previous occurrence of $f_i$ crosses a block border,
i.e. 
has an occurrence with some offset $1 \leq m \leq |f_i|-1$, 
and $f_i = \headStr_m(f_i)\tailStr_m(f_i)$.
For each $m = 1,\ldots, |f_i|-1$, we ask
$\Rangequery(\Points_{[\varepsilon]},$ $hr(\headStr_m(f_i)),$ $tr(\tailStr_m(f_i)))$.
If a pair $(\revert{A},X)$ is returned, this means that
$AX$ occurs in $\Meta{S}$ and
$A[r-m+1..r] = \headStr_m(f_i)$ and $X[1..\tailStr_m(f_i)] = \tailStr_m(f_i)$.
Thus, a previous occurrence of $f_i$ 
can be computed from $\minpos_{[AX]}$.
The total time required is $O(|f_i|\log n)$.
If all the $\Rangequery$ queries returned $\mathbf{nil}$, this implies that
no occurrence of $f_i$ crosses a block border and 
$f_i$ occurs only inside meta-blocks.
We develop an interesting technique to deal with this case.

\begin{lemma}
  \label{lemma:getmetablock}
  For string $S[1..k]$ and increasing values of $1\leq k \leq N$,
  we can maintain a data structure
  in $O(N\log N)$ total time and $O(N\log\sigma)$ bits of space
  that,
  given any meta-character $A$,
  allows us to retrieve a meta-character $A^\prime$ that corresponds
  to a meta block of $S$, and some integer $d$ such that
  $A^\prime[1+d..r] = A[1..r-d]$ and $0 \leq d \leq d_{A,k}$,
  in $O(\log N)$ time, where
  $d_{A,k} = \min \{ (l-1) \bmod r \mid 1 \leq l \leq k - r+1, A = S[l..l+r-1] \}$.
  (Also see Fig.~\ref{fig:insideMetaBlock} in Appendix A.)
\end{lemma}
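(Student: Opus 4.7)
I would reduce the query to finding the longest prefix $A[1..\ell]$ of $A$ that appears as a suffix of some meta-block of $S[1..k]$; setting $d = r - \ell$ and letting $A'$ be any witnessing meta-block yields $A'[d+1..r] = A[1..r-d]$. This $d$ automatically satisfies $d \le d_{A,k}$, because the first occurrence of $A$ in $S[1..k]$, at some position $l^*$ with $(l^*-1)\bmod r = d_{A,k}$, lies inside a meta-block witnessing the equation at offset exactly $d_{A,k}$; hence the minimum valid $d$ is at most $d_{A,k}$.

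\textbf{Data structure.} The plan is to maintain a Patricia-compressed trie $T$ over the reversed meta-blocks $\{\revert{\Meta{S}[j]}\}_{1 \le j \le k/r}$, with edge labels stored as (position, length) references into $S$, children at each node organized in a balanced binary search tree keyed on characters, and each node annotated by a representative meta-block index reachable through it. Since $T$ has at most $n = N/r$ leaves and thus $O(n)$ internal nodes, and since each node stores $O(\log N)$-bit fields while the sum of the children across the tree is $O(n)$, the total storage is $O(n \log N) = O(N\log\sigma)$ bits. To support the longest-matching-prefix query, I would augment $T$ with Aho--Corasick-style failure links, sending each node $v$ to the node whose path label is the longest proper suffix of $v$'s path label still present in $T$.

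\textbf{Query and maintenance.} For a query $A$, I would process $\revert{A}$ character by character through the automaton, following a trie transition when one exists on the current character and jumping along failure links otherwise. After all $r$ characters are processed, the current state's path label is the longest suffix of $\revert{A}$ that is also a prefix of some reversed meta-block, equivalently the longest prefix of $A$ that is a suffix of some meta-block of $S[1..k]$; the stored meta-block pointer yields $A'$ and the depth $\ell$ yields $d = r - \ell$. Using $O(\log\sigma)$-time child lookups and the standard amortized argument for failure-link traversal, the total query cost is $O(r\log\sigma + \log N) = O(\log N)$. For maintenance, each time $k$ grows by $r$ a single reversed meta-block is inserted in $T$, creating $O(1)$ new Patricia nodes whose failure links are computed by walking downward from the parent's failure link, mirroring the suffix-link maintenance of the online DAWG construction.

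\textbf{Main obstacle.} The principal difficulty is keeping the failure links consistent in the dynamic setting within the $O(N\log\sigma)$-bit budget: when a new reversed meta-block is inserted, existing failure links may need to be redirected to point into the freshly created path. Following the amortized accounting used for secondary edges and suffix links in the online DAWG construction (Algorithm~\ref{algorithm:onlineDAWGconstruction} in Appendix~B), I would charge each redirection to the creation of a single Patricia node, bounding the total number of failure-link updates over the entire construction by $O(n)$; since each update is executed in $O(\log N)$ time via the underlying dynamic trie, the total maintenance cost is $O(n\log N) = O(N\log\sigma)$, well within the lemma's $O(N\log N)$ budget.
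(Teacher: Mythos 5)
Your top-level reduction is sound: finding the longest prefix $A[1..\ell]$ of $A$ that is a suffix of some meta-block of $S[1..k]$ and setting $d = r-\ell$ does produce a valid pair $(A',d)$, and $d \leq d_{A,k}$ because the first occurrence of $A$ at offset $d_{A,k}$ witnesses $\ell \geq r - d_{A,k}$. The gap is in the last paragraph, and it is fatal to the $O(\log N)$ per-query / $O(N\log N)$ total bound. Maintaining Aho--Corasick failure links under pattern insertion is not an operation whose cost can be charged to the new Patricia nodes created. When a new reversed meta-block is inserted and it introduces a new prefix node $v$ with label $w$, every existing node whose longest proper suffix lying in the trie is now $w$ must have its failure link redirected to $v$; a single new node can therefore attract $\Omega(n)$ redirections (for instance, inserting a pattern that begins with character $a$ into a trie whose $n$ patterns all end in $\ldots a$, when no previous pattern started with $a$). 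This is not analogous to the DAWG suffix-link amortization you invoke, which rests on structural invariants specific to DAWGs. Without correct failure links the query can return a strictly shorter prefix than the longest one, so correctness (not just efficiency) would suffer, and with correct failure links the maintenance cost is not $O(n\log N)$ as claimed.

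The paper's proof avoids this entirely by using a structurally different tree $T_k$. Its nodes are meta-characters of $\Sigma^r$ occurring anywhere in $S[1..k]$ (recorded in a length-$N$ bit vector), not prefixes of meta-blocks; the parent of a node $A$ is a meta-character $B$ with $B[2..r] = A[1..r-1]$, i.e., a one-position left shift, and is stored using only the single character $B[1]$, occupying $\log\sigma$ bits per node. A second length-$N$ bit vector marks which nodes are actual meta-blocks of $S$. The maintained invariant is that any node $A$ has a meta-block ancestor within $d_{A,k}$ steps, so a query is simply walking up from $A$ for at most $r-1 < \log N$ parent steps; and each increment of $k$ updates at most one parent pointer, with the check driving the update (computing the ancestor distance $x_C$) also done by a walk of length $O(r)$. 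This structure gives $O(\log N)$ query and update time with no failure links at all, which is the key idea missing from your proposal.
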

\begin{proof}
Consider a tree $T_k$ where nodes are the set of meta-characters
occurring in $S[1..k]$. The root is $\Meta{S}[1]$.
For any meta-character $A \neq \Meta{S}[1]$,
the parent $B$ of $A$ must satisfy $B[2..r] = A[1..r-1]$ and $A\neq B$.
Given $A$, its parent $B$ can be encoded
by a single character $B[1] \in\Sigma$ that occupies $\log\sigma$ bits
and can be recovered from $B[1]$ and $A$ in constant time by simple
bit operations.
Thus, together with $\PM_k$ used in Section~\ref{subsec:smallerthanr}
which indicates which meta-characters are nodes of $T_k$,
the tree can be encoded in $O(N\log\sigma)$ bits.
We also maintain another bit vector $\PP_k$ of length $N$ so that
we can determine in constant time, whether a node in $T_k$ corresponds
to a meta-block.
The lemma can be shown if
we can maintain the tree for increasing $k$ so that for any node $A$
in the tree, either $A$ corresponds to a meta-block ($d_{A,k} = 0$), or,
$A$ has at least one ancestor at most $d_{A,k}$ nodes above it
that corresponds to a meta-block.
Assume that we have $T_{k-1}$, and want to update it to $T_k$.
Let $A = S[k-r+1..k]$.
If $A$ previously corresponded to
or the new occurrence corresponds to a meta-block,
then, $d_{A,k} = 0$ and we simply set $X_k[A] = 1$
and we are done.
Otherwise, let $B = S[k-r..k-1]$ and denote by $C$ the parent of $A$
in $T_{k-1}$, if there was a previous occurrence of $A$.
Based on the assumption on $T_{k-1}$, 
let $x_B\leq d_{B,k-1}=d_{B,k}$ and $x_C$
be the distance to the closest ancestor of $B$ and $C$, respectively,
that correspond to a meta-block. We also have that $d_{A,k-1} \geq x_C + 1$.
If $(k-r)\bmod r \geq x_C + 1$, then
$d_{A,k} = \min \{ (k-r)\bmod r, d_{A,k-1}\} \geq x_C + 1$,
i.e., the constraint is already satisfied and nothing needs to be done.
If $(k-r)\bmod r < x_C + 1$ or there was no previous occurrence of $A$,
we have that $d_{A,k} = (k-r)\bmod r$.
Notice that in such cases, we cannot have $A=B$ since that would imply
$d_{A,k} = d_{A,k-1} \neq (k-r) \bmod r$,
and thus by setting the parent of $A$ to $B$,
we have that there exists an ancestor corresponding to a meta-block at distance
$x_B + 1 \leq d_{B,k} + 1 \leq (k-r-1)\bmod r + 1 = d_{A,k}$.

Thus, what remains to be shown is how to compute $x_C$ in order to
determine whether $(k-r)\bmod r < x_C + 1$.
Explicitly maintaining the distances
to the closest ancestor corresponding to a meta-block
for all $N$ meta characters will take too much space ($O(N\log\log N)$ bits).
Instead, since the parent of a given meta-character can be obtained 
in constant time, we calculate $x_C$
by simply going up the tree from $C$, which takes $O(x_C) = O(\log N)$ time.
Thus, the update for each $k$ can be done in $O(\log N)$ time, proving the lemma.
\qed
\end{proof}
Using Lemma~\ref{lemma:getmetablock}, we can retrieve
a meta-character $A^\prime$ that corresponds to a meta-block
and an integer $0 \leq d \leq d_{A_i,k}$ such that
$A^\prime[1+d..r] = A_i[1..r-d],$ in $O(\log N)$ time.
Although $A^\prime$ may not actually occur $d$ positions prior to 
an occurrence of $A_i$ in $S[1..k]$,
$f_i$ is guaranteed to be completely contained in $A^\prime$
since it overlaps with $A_i$, 
at least as much as any meta-block actually occurring prior to
$A_i$ in $S[1..k]$.
Thus, %
$f_i = A_i[1..|f_i|] = A^\prime[1+d..d+|f_i|]$,
and $(\minpos_{[A^\prime]} -1) r + 1 + d$ is a previous occurrence of
$f_i$.
The following lemma summarizes this section.
\begin{lemma}\label{lem:previousoccurrence}
  We can maintain in $O(N\log N)$ total time, a dynamic data structure
  occupying $O(N\log\sigma)$ bits of space that allows
  a previous occurrence of $f_i$ to be computed in $O(|f_i|\log N)$
  time.
\end{lemma}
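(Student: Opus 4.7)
The plan is to handle the two cases $|f_i|\ge r$ and $|f_i|<r$ separately, in both cases leveraging the data structures already constructed in the previous subsections so that no additional working space beyond $O(N\log\sigma)$ bits is introduced and no extra update work is required.

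When $|f_i|\ge r$, the computation of $f_i$ in Lemma~\ref{lem:longer_case} already pinned down the triple $(A,u,X)$ with $f_i=f_i^m$, $\revert{A}\in hr(\headStr_m(f_i))$, $u=\bodyStr_m(f_i)$, and $X\in tr(\tailStr_m(f_i))$; I would simply read $\minpos_{[uX]}$, or $\minpos_{[AuX]}$ in the case where the preceding meta-character actually matters (that is, when $uX=\longest{uX}$ and $m>0$). In the latter sub-case $[AuX]$ is reached from $[uX]$ by a single reverse suffix-link traversal in $O(\log N)$ time, comfortably within the budget $O(|f_i|\log N)$, and the arithmetic $(\minpos-1)r+\text{offset}$ yields the occurrence.

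When $|f_i|<r$, I would first spend $O(\log N)$ time with a rank followed by a select on $\PM_{l_i+r-1}$ over the interval $tr(f_i)$ to obtain a meta-character $A_i$ that has $f_i$ as a prefix and is known to occur in $S[1..l_i+r-1]$. Since explicitly storing a previous occurrence for every meta-character would cost $\Theta(N\log N)$ bits, I would recover one in two sub-cases. First, if some previous occurrence of $f_i$ straddles a block border, then for each offset $m=1,\ldots,|f_i|-1$ I would issue $\Rangequery(\Points_{[\varepsilon]},hr(\headStr_m(f_i)),tr(\tailStr_m(f_i)))$; a non-$\mathbf{nil}$ answer $(\revert{A},X)$ identifies a consecutive pair of meta-blocks whose junction contains $f_i$, and the occurrence is read off from $\minpos_{[AX]}$. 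These $|f_i|-1$ queries cost $O(|f_i|\log n)=O(|f_i|\log N)$ in total by Lemma~\ref{lem:Points}.

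If every such query returns $\mathbf{nil}$, all previous occurrences of $f_i$ lie strictly inside a single meta-block, and here I would invoke Lemma~\ref{lemma:getmetablock} on $A_i$ with $k=l_i+r-1$ to obtain, in $O(\log N)$ time, a meta-character $A^\prime$ that genuinely corresponds to a meta-block of $S[1..k]$, together with a shift $0\le d\le d_{A_i,k}$ satisfying $A^\prime[1+d..r]=A_i[1..r-d]$. The delicate step, and the main obstacle, is to argue that $f_i$ really fits in this overlap: the sub-case hypothesis forces every previous occurrence of $A_i$ to lie inside a meta-block, so $|f_i|\le r-d_{A_i,k}\le r-d$, whence $f_i=A_i[1..|f_i|]=A^\prime[1+d..d+|f_i|]$ and $(\minpos_{[A^\prime]}-1)r+1+d$ is a valid previous occurrence. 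Combining the per-factor costs with the $O(N\log N)$ maintenance bounds already established for $\PM$, the packed DAWG, $\Points$, and the tree $T_k$ of Lemma~\ref{lemma:getmetablock}, all fitting in $O(N\log\sigma)$ bits, yields the lemma.
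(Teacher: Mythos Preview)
Your proposal is correct and follows essentially the same approach as the paper: the same case split on $|f_i|\ge r$ versus $|f_i|<r$, and within the latter the same two-subcase analysis (cross-border via $\Rangequery$ on $\Points_{[\varepsilon]}$, then Lemma~\ref{lemma:getmetablock} for the fully-inside-a-block case). The only imprecision is the phrase ``every previous occurrence of $A_i$ to lie inside a meta-block''---what you mean is that the prefix $f_i$ at each occurrence of $A_i$ lies inside a meta-block, since $|A_i|=r$---but the resulting bound $|f_i|\le r-d_{A_i,k}\le r-d$ is exactly the paper's justification.
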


\section{On-line LZ factorization based on RLE} \label{sec:RLE}

For any string $S$ of length $N$, 
let $\RLE(S) = a_1^{p_1} a_2^{p_2} \cdots a_m^{p_m}$ denote the 
\emph{run length encoding} of $S$.
Each $a_k^{p_k}$ is called an RL factor of $S$,
where $a_{k} \neq a_{k+1}$ for any $1 \leq k < m$,
$p_h \geq 1$ for any $1 \leq h \leq m$,
and therefore $m \leq N$.
Each RL factor can be represented 
as a pair $(a_k,p_k) \in \Sigma\times [1..N]$,
using $O(\log N)$ bits of space.
As in the case with packed strings, we consider the on-line LZ
factorization problem, where the string is given as a 
sequence of RL factors and we are to compute the 
s-factorization of $S_j = a_1^{p_1} \cdots a_j^{p_j}$ for all $j = 1, \ldots, m$.
Similar to the case of packed strings,
we construct the DAWG of $\RLE(S)$ of length $m$,
which we will call the RLE-DAWG, in an on-line manner.
The RLE-DAWG has $O(m)$ states and edges and each edge label 
is an RL factor $a_k^{p_k}$, occupying a total of $O(m \log N)$ bits of space.
We can show that the first RL-factor of $f_i$ (corresponding to
the offset in the case of packed string), can be determined very easily,
and therefore greatly simplifies the algorithm.
Moreover, we can show that the problem of finding valid extensions of the
s-factor can be reduced to the simpler dynamic predecessor/successor
problem,
and by using the linear-space dynamic predecessor/successor data structure 
of~\cite{BeameF02}, we obtain the following result. (See Appendix for
full proof.)
\begin{theorem} \label{theo:RLE}
Given an $\RLE(S) = a_1^{p_1} a_2^{p_2} \cdots a_m^{p_m}$ of size $m$ 
of a string $S$ of length $N$,
we can compute in an on-line manner the s-factorization of $S$ in 
$O\left(m \cdot \min \left\{\frac{(\log\log m)(\log \log N)}{\log\log\log N}, \sqrt{ \frac{\log m}{\log \log m}}\right\}\right)$ 
time using $O(m \log N)$ bits of space.
\end{theorem}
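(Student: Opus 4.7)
The plan is to mirror the structure of Section~3, with the RLE-DAWG of the meta-string $a_1^{p_1} \cdots a_m^{p_m}$ playing the role of the {\PackedDAWG}. Treating each RL factor as a meta-character over the alphabet $\Sigma \times [1..N]$, Theorem~\ref{theorem:orig_dawg} gives that the RLE-DAWG has $O(m)$ states, edges and suffix links, and can be built online in $O(m \log m)$ time and $O(m \log N)$ bits of space. As in Section~3, we compute each $f_i$ incrementally by extending it as long as a previous occurrence persists.

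The first key simplification, which explains why no loop over $r$ offsets is needed, is that the analogue of the ``offset'' is trivially determined. Suppose $l_i+1$ lies inside RL factor $a_j^{p_j}$ with $q$ copies of $a_j$ remaining from that position. If $f_i$ stays within this RL factor, then $f_i = a_j^{|f_i|}$ and its maximum feasible length is controlled by the largest previously seen run-length of $a_j$, which I would maintain with a single value per character. Otherwise $f_i$ extends past this RL factor, and because the next character differs from $a_j$, the first RL factor of $f_i$ must be exactly $a_j^q$. Thus the first RL factor of $f_i$ is uniquely fixed.

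The second simplification is that traversal of the body of $f_i$ in the RLE-DAWG reduces to one-dimensional predecessor/successor queries rather than the 2D range queries of Section~3. At a state $[u]$, extending with the next RL factor $b^c$ requires an outgoing edge labeled $b^{c'}$ with $c' = c$ for an interior RL factor of $f_i$, or with arbitrary $c'$ for the terminal RL factor (in which case $f_i$ ends after $\min(c, c')$ copies of $b$). The left-context constraint analogous to $\Points_{[u]}$ likewise becomes one-dimensional: given a preceding character, we ask whether an incoming suffix-link or secondary edge has its exponent component at least (or exactly) $q$. Both kinds of queries reduce to predecessor/successor over a dynamic set of exponents, maintained per (state, character) pair; a charging argument mirroring Lemma~\ref{lem:PointsSize} bounds the total number of insertions across all these sets by $O(m)$, keeping the total space at $O(m \log N)$ bits.

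Equipping each such set with the linear-space dynamic predecessor data structure of~\cite{BeameF02} makes every query and update run in $O(\min\{(\log\log m)(\log\log N)/\log\log\log N, \sqrt{\log m / \log\log m}\})$ time. Summed over the online RLE-DAWG construction and the total RL-factor extensions performed while computing all $f_i$ (both $O(m)$), this yields the bound in Theorem~\ref{theo:RLE}. Retrieval of a previous occurrence of $f_i$ is handled analogously to Section~\ref{subsec:retrieving_occurrences}, by storing $\minpos_{[u]}$ at each RLE-DAWG state and converting it back to a position in $S$ via an online-maintained prefix sum of $(p_k)$. The hard part will be the careful bookkeeping needed to verify that the left-context and extension constraints genuinely collapse to 1D predecessor/successor queries, especially at the two ends of $f_i$ where partial RL factors must be reconciled with the meta-character structure of the RLE-DAWG; once that reduction is in place, the claimed time bound follows directly from~\cite{BeameF02}.
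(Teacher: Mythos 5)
Your high-level plan matches the paper's: build the RLE-DAWG on-line, observe that $z\le 2m$ (each RL factor is covered by at most two s-factors, so the first RL factor of $f_i$ is a forced suffix of the RL factor containing $l_i+1$ --- this is exactly Lemma~\ref{lem:RLE_LZ_linear}), and aim for predecessor/successor so that~\cite{BeameF02} applies. All of that is correct and you state it clearly.

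The gap is exactly where you flagged it, and it is not minor. The left-context/extension constraint does \emph{not} collapse to 1D generically; it is genuinely two-dimensional. For a state $[u]$, preceding character $a$, and edge character $b$, the relevant data are pairs $(p,q)$ where $p$ is the exponent of a preceding $a$-run and $q$ the exponent of a following $b$-run for some occurrence of $\longest{u}b^q$. Deciding whether $a^p \longest{u} b^q$ occurs requires a max-$x$ query over the range $[0,N]\times[q,N]$, and when it does not, determining the terminal RL factor requires a max-$y$ query over $[p,N]\times[0,q]$. Your decomposition into an edge-exponent constraint and a \emph{separate} left-context constraint does not soundly capture the joint condition: an occurrence may have a long preceding $a$-run with a short following $b$-run and another occurrence the reverse, with neither simultaneously satisfying both thresholds. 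The paper's resolution is to maintain, per triple $(\text{state},a,b)$, only the \emph{dominant} points of this 2D set: dominant points form a staircase, so the max-$x$ query becomes a $y$-coordinate successor query for $q$ and the max-$y$ query becomes an $x$-coordinate successor query for $p$. Each inserted point is deleted from the dominant set at most once when it becomes dominated, so the total number of predecessor-structure operations stays $O(m)$, which is what makes the~\cite{BeameF02} bound go through. Two further details you omit: the sets must be indexed by the triple $(\text{state},a,b)$, not a (state, character) pair, and since $\sigma$ can be as large as $N$ a two-level search structure over character pairs at each state (again via~\cite{BeameF02}) is required to locate the right set within the claimed time and $O(m\log N)$ bits.
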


\bibliographystyle{splncs03}
\bibliography{ref}
\clearpage
\appendix
\section*{Appendix A: Figures}
\begin{figure}[ph]
\centerline{
  \includegraphics[width=0.8\textwidth]{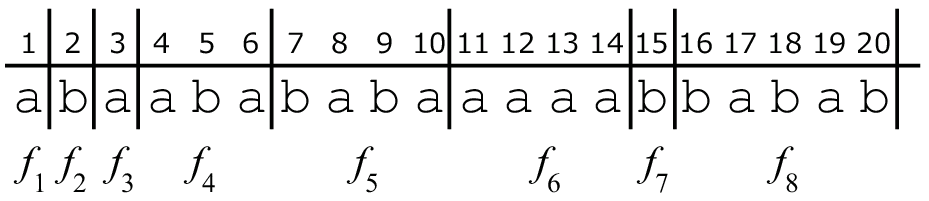}
}
\caption{
The s-factorization of the string
$S = \mathtt{abaabababaaaaabbabab}$
is
  $\mathtt{a}$,
  $\mathtt{b}$,
  $\mathtt{a}$,
  $\mathtt{aba}$,
  $\mathtt{baba}$,
  $\mathtt{aaaa}$,
  $\mathtt{b}$,
$\mathtt{babab}$. 
Each factor can be represented as a single character or a pair of
integers representing the position of a previous occurrence of the
factor and its length, i.e.,
  $\mathtt{a}$, $\mathtt{b}$, $(1,1)$, $(1,3)$,
  $(5,4)$, $(10,4)$, $(2,1)$, $(5,5)$.
Notice that the previous occurrence of an s-factor may overlap with
itself. For example, in the case of $f_5 = \mathtt{baba}$,
the two occurrences of $f_5$ are at positions $5$ and $7$.}
\label{fig:s-factorization}
\end{figure}

\begin{figure}[ph]
\centerline{
\includegraphics[width=0.8\textwidth]{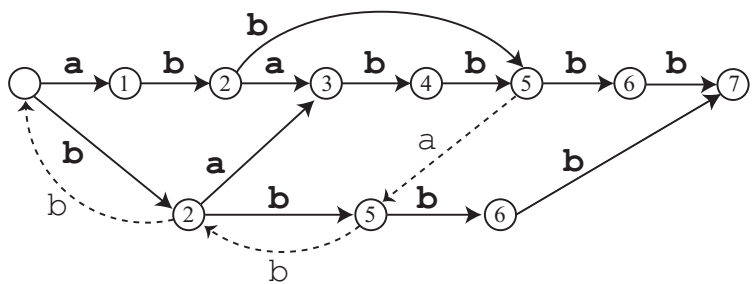}
}
\caption{The DAWG for string $\mathtt{ababbbb}$.
The solid and dashed arcs represent edges and suffix links, respectively.
For simplicity, only a subset of suffix links is shown.
The number in each state $[u]$ is $\minpos_{[u]}$.
For example, consider the non-empty suffixes of substring $\mathtt{ababb}$.
They are represented by 3 different states 
$[\mathtt{ababb}] = \{\mathtt{ababb}, \mathtt{babb}, \mathtt{abb}\}$, 
$[\mathtt{bb}] = \{\mathtt{bb}\}$, and $[\mathtt{b}] = \{\mathtt{b}\}$,
because $\Endpos_{[\mathtt{ababb}]} = \{5\}$,
$\Endpos_{[\mathtt{bb}]} = \{5,6,7\}$,
and $\Endpos_{[\mathtt{b}]} = \{2, 4, 5, 6, 7\}$.
The suffix link of $[\mathtt{ababb}]$ is labeled by character $\mathtt{a}$
and points to $[\mathtt{bb}]$.
This is because $\mathtt{bb}$ is the longest member of $[\mathtt{bb}]$
and $\mathtt{abb}$ is a member of $[\mathtt{ababb}]$.
The edge $([\mathtt{abab}], \mathtt{b}, [\mathtt{ababb}])$ is primary,
while the edge $([\mathtt{ab}], \mathtt{b}, [\mathtt{ababb}])$ is secondary.
}
\label{fig:DAWG}
\end{figure}

\begin{figure}[ph]
\centerline{
\includegraphics[width=1.0\textwidth]{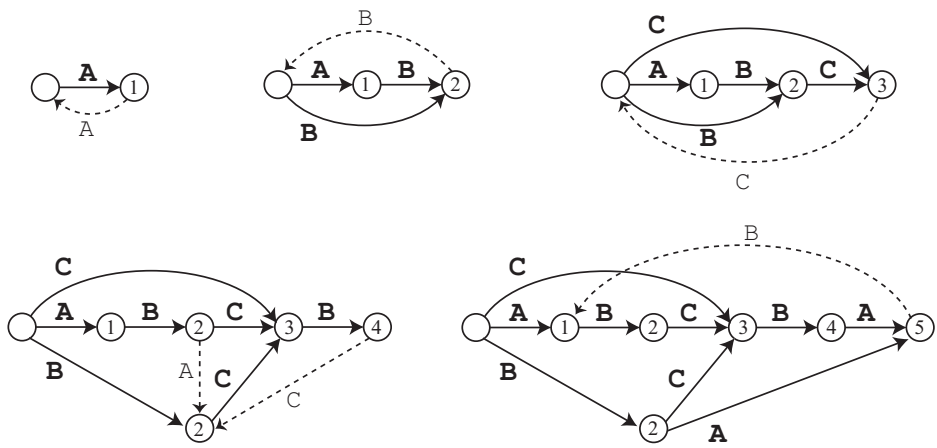}
}
\caption{A snapshot of the on-line construction of the {\PackedDAWG} of
meta string $\Meta{S} = \mathtt{ABCBA}$ for string $S = \mathtt{aaa aab aba aab aaa}$, 
where the block size $r = 3$, $\mathtt{A = aaa}$, $\mathtt{B = aab}$, and $\mathtt{C = aba}$.
The solid and dashed arcs represent edges and suffix links, respectively.
For simplicity, only the new suffix links are shown at each step.
The number in each state $[u]$ is $\minpos_{[u]}$ in $\Meta{S}$.
}
\label{fig:online_DAWG}
\end{figure}

\begin{figure}[ph]
  \centerline{
    \includegraphics[width=1.0\textwidth]{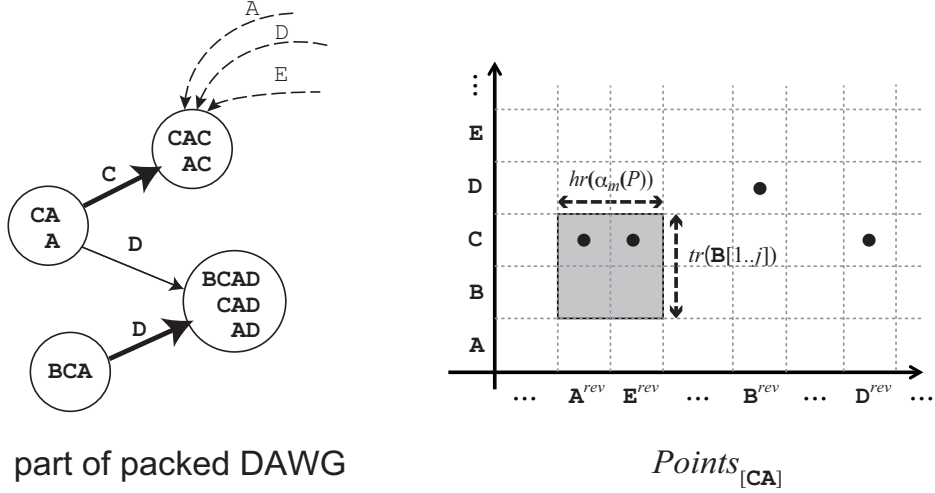}
  }
  \caption{
    Using the augmented {\PackedDAWG} to compute the longest prefix $f$
    of $P$ that occurs with offset $m$ in $S$
    (Lemma~\ref{lem:longestprefix}).
    The left figure is a part of the {\PackedDAWG} for meta-string $\Meta{S}$,
    where
    bold solid arcs represent primary edges,
    regular solid arcs represent secondary edges,
    and the dashed arcs represent suffix links.
    The right figure is the set $\Points_{[\mathtt{CA}]}$ 
    for state $[\mathtt{CA}]$.
    The pairs $(\revert{\mathtt{A}}, \mathtt{C})$,
    $(\revert{\mathtt{D}}, \mathtt{C})$,
    and $(\revert{\mathtt{E}}, \mathtt{C})$
    correspond to the incoming suffix links of state $[\mathtt{CAC}]$,
    which is a primary child of $[\mathtt{CA}]$ with the incoming primary edge labeled by $\mathtt{C}$.
    On the other hand, the pair $(\revert{\mathtt{B}}, \mathtt{D})$ corresponds to 
    the outgoing secondary edge of $[\mathtt{CA}]$ labeled by $\mathtt{D}$.
    Assume that for some $0 \leq m < r$ and string
    $P = \headStr_m(P)\mathtt{CAB}$, where
    $hr(\headStr_m(P)) = [\revert{\mathtt{A}}, \revert{\mathtt{E}}]$.
    Assume also that we have traversed the {\PackedDAWG} with $\bodyStr_m(P)[1..2] = \mathtt{CA}$,
    and want to traverse with the next meta-character $\mathtt{B}$.
    Since $\mathtt{CA} = \protect\longest{\mathtt{CA}}$ and
    there is no point in range $hr(\headStr_m(P)) \times [\mathtt{B}, \mathtt{B}]$,
    we see that $\headStr_m(P)\mathtt{CAB}$ does not occur with offset
    $m$ in $S$, so $\bodyStr_m(f) = \mathtt{CA}$.
    We then compute $\tailStr_m(f)$ %
    which is a prefix of $\mathtt{B}$,
    by querying a point in range
    $hr(\headStr_m(P)) \times tr(\mathtt{B}[1..j])]$
    for all $0 \leq j < |\tailStr_m(f)|$.
    Next, consider what happens for another string $P^\prime =
    \headStr_m(P^\prime)\mathtt{AB}$,
    where $\revert{\mathtt{C}} \in hr(\headStr_m(P^\prime))$.
    If we have traversed with
    $\bodyStr_m(P^\prime)[1..1] = \mathtt{A}$
    from the source, we are at state $[\mathtt{A}]=[\mathtt{CA}]$,
    and $\mathtt{A}\neq\protect\longest{\mathtt{A}} = \mathtt{CA}$.
    At this point, it is guaranteed that all occurrences of 
    $\mathtt{A}$ (and all extensions to $\mathtt{A}$ that can be traversed on the {\PackedDAWG})
    will be immediately preceded by $\mathtt{C}$.
    Thus we only need to check outgoing edges.
    Since there is no outgoing edge labeled with the next
    meta-character $\mathtt{B}$,    
    $\tailStr_m(f^\prime)$ is the longest lcp between labels of 
    outgoing edges from $[\mathtt{CA}]$, which is the lcp
    between the successor $\mathtt{B}^+ = \mathtt{C}$ and $\mathtt{B}$
    (the predecessor $\mathtt{B}^-$ of $\mathtt{B}$ is $\mathbf{nil}$).
    }
  \label{fig:points}
\end{figure}

\begin{figure}[ph]
\centerline{
\includegraphics[width=1.0\textwidth]{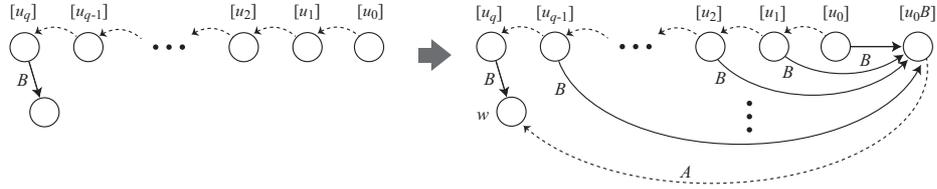}
}
\caption{Illustration for case (a) of Lemma~\ref{lem:Points}.
Primary edge $([u_0], B, [u_0B])$ and 
secondary edges $([u_j], B, [u_0B])$ are created for 
all $1 \leq j < q$.
The suffix link of the new sink $[u_0B]$ points to state $w$.
}
\label{fig:secondary_edges}
\end{figure}

\begin{figure}[ph]
\centerline{
\includegraphics[width=0.9\textwidth]{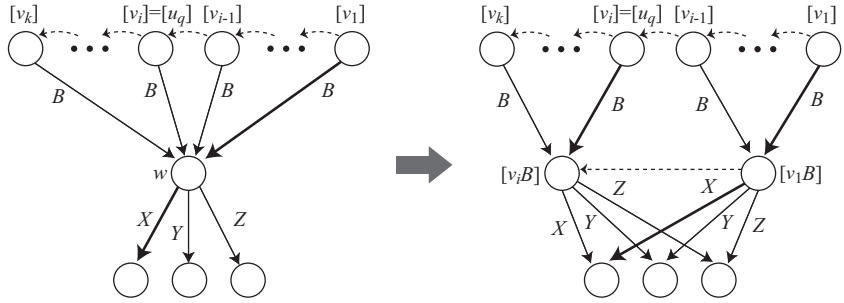}
}
\caption{
Illustration for case (b) of Lemma~\ref{lem:Points}.
State $w$ is split into two states $[v_1B]$ and $[v_kB]$
due to the update of the {\PackedDAWG}.
The bold arcs represent primary edges.
The outgoing edges of $w$ may or may not be primary,
however, the copied outgoing edges from $[v_iB]$ are all secondary.
}
\label{fig:state_separation}
\end{figure}

\begin{figure}[ph]
  \centerline{
   \includegraphics{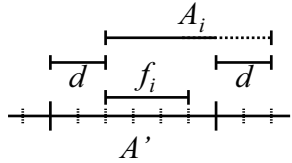}
  }
  \caption{
    Finding a meta-block $A^\prime$ that has
    $A_i[1..r-d]$ as a suffix (Lemma~\ref{lemma:getmetablock}).
    We can find $A^\prime$ and $d \leq d_{A,k}$ from $A_i$ in $O(\log N)$ time.
    Notice that $A_i$ does not necessarily occur at the position depicted, 
    but if we assume that there exists an occurrence of $A_i$ such that
    its prefix $f_i$ is inside a meta-block,
    $f_i$ is guaranteed to be completely contained in $A^\prime$
    since $A^\prime$ overlaps with $A_i$
    at least as much as any meta-block actually occurring
    prior to $A_i$.
    Thus, a previous occurrence of $f_i$ can be retrieved from
    $\minpos_{[A^\prime]}$, $d$, and $|f_i|$.
  }
  \label{fig:insideMetaBlock}
\end{figure}

\clearpage
\section*{Appendix B: Pseudo Codes}
\begin{algorithm2e}
  \SetKw{KwAnd}{and}
  \caption{On-line DAWG construction algorithm by Blumer et al.~\cite{blumer85:_small_autom_recog_subwor_text}}
  \label{algorithm:onlineDAWGconstruction}
  \underline{\textbf{Procedure} \textsf{builddawg}(\textsf{w})}\;
  Create state \textsf{source}; \textsf{cursink} = \textsf{source}\;
  \lFor{each letter \textsf{a} of \textsf{w}}{
    \textsf{cursink} = \textsf{update}(\textsf{cursink}, \textsf{a})\;
  }
  \Return \textsf{source}\;
  \vspace{0.2cm}
  \underline{\textbf{Procedure} \textsf{update(\textsf{cursink}, \textsf{a})}}\;
  Create state \textsf{newsink} and primary edge
  (\textsf{cursink}, \textsf{a}, \textsf{newsink})\;
  \textsf{curstate} = \textsf{cursink}; \textsf{suffixstate} = undefined\;
  \While{\textsf{curstate} $\neq$ \textsf{source} \KwAnd \textsf{suffixstate} = undefined}{
    \textsf{curstate} = \textit{suflink}(\textsf{curstate})\;
    \uIf{$\not\exists\textsf{cstate}$
      s.t. $(\textsf{curstate},\textsf{a},\textsf{cstate})\in E$}{
    create secondary edge $(\textsf{curstate},\textsf{a},\textsf{newsink})$\;
    }\uElseIf{$(\textsf{curstate},
      \textsf{a},\textsf{cstate})\in E$
      is primary}{
      \textsf{suffixstate} = \textsf{cstate}\;
    }\Else(\tcc*[h]{$(\textsf{curstate},
      \textsf{a},\textsf{cstate})\in E$
      is secondary}){
      \textsf{suffixstate} =
      $\textsf{split}(\textsf{curstate},\textsf{a},\textsf{cstate})$\;
    }
  }
  \vspace{0.2cm}
  \underline{\textbf{Procedure} \textsf{split}(\textsf{pstate},
    \textsf{a}, \textsf{cstate})}\;
  Create state \textsf{newcstate}\;
  Change secondary edge $(\textsf{pstate},\textsf{a},\textsf{cstate})$
  to primary edge $(\textsf{pstate},\textsf{a},\textsf{newcstate})$\;
  \For{every edge $(\textsf{cstate}, \textsf{c}, \textsf{dest})\in E$}{
    create secondary edge $(\textsf{newcstate}, \textsf{c},\textsf{dest})$\;
  }
  \textit{suflink}(\textsf{newcstate}) = \textit{suflink}(\textsf{cstate})\;
  \textit{suflink}(\textsf{cstate}) = \textsf{newcstate}\;
  \textsf{curstate} = \textsf{pstate}\;
  \While{\textsf{curstate} $\neq$ \textsf{source}}{
    \textsf{curstate} = \textit{suflink}(\textsf{curstate})\;
    \uIf{$\exists\textsf{cstate}$ s.t. $(\textsf{curstate},\textsf{a},\textsf{cstate})\in E$}{
      Change secondary edge
      $(\textsf{curstate},\textsf{a},\textsf{cstate})$ to
      secondary edge
      $(\textsf{curstate},\textsf{a},\textsf{newcstate})$\;
    }\Else{
      break out of the while loop\;
    }
  }
  \Return \textsf{newcstate}\; 
\end{algorithm2e}

\clearpage
\section*{Appendix C: Proof of Theorem~\ref{theo:RLE}}

Here we provide a proof for Theorem~\ref{theo:RLE} and
show how to compute the s-factorization of a string $S$ 
from $\RLE(S)$, efficiently and on-line.
We begin with the following lemma.

\begin{lemma} \label{lem:RLE_LZ_linear}
  Each RL factor $a_k^{p_k}$ of $\RLE(S)$ is covered by 
  at most 2 s-factors of string $S$.
\end{lemma}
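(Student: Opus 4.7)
\medskip

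\noindent\textbf{Proof plan for Lemma~\ref{lem:RLE_LZ_linear}.}
The plan is to argue by contradiction: if an RL factor were covered by three or more s-factors, one of them would sit strictly between two others and thus lie entirely inside the run, and I will show such a middle s-factor could always be extended by one more character, contradicting its maximality.

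Concretely, I would fix an RL factor $a_k^{p_k}$ occupying positions $[l+1,\, l+p_k]$ in $S$, and suppose for contradiction that s-factors $f_{j_1}, f_{j_2}, \ldots, f_{j_t}$ all overlap this interval, with $t \geq 3$. By construction $f_{j_2}$ begins after $f_{j_1}$ has ended and finishes before $f_{j_3}$ begins, so its entire span lies strictly inside $[l+1,\, l+p_k]$. Hence $f_{j_2} = a^q$ for some $q \geq 1$, and if I write its starting position as $l + q_1 + 1$, the facts that $f_{j_1}$ ends strictly inside the run and $f_{j_3}$ begins strictly inside it give
$1 \leq q_1 \quad \text{and} \quad q_1 + q < p_k.$

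The key step is then to exhibit a previous occurrence of the one-character extension. The character $S[l + q_1 + q + 1]$ lies inside the run, hence equals $a$, so $a^{q+1}$ is a prefix of $f_{j_2} f_{j_2+1} \cdots f_z$. Moreover, $S[l + q_1 \,..\, l + q_1 + q] = a^{q+1}$ as well, and this occurrence starts at position $l + q_1$, which is $\leq$ the last position of $f_1 \cdots f_{j_2 - 1}$. Thus $a^{q+1}$ has an occurrence beginning strictly before $f_{j_2}$ (and self-referencing is explicitly allowed, so overlap with $f_{j_2}$ itself is harmless), contradicting the maximality in the definition of $f_{j_2}$. The lemma follows.

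I do not expect a genuine obstacle here; the only point to be careful about is the bookkeeping of positions at the two boundaries of the run — specifically, making sure that $q_1 \geq 1$ (so that a valid strictly earlier starting position exists for the extension) and $q_1 + q + 1 \leq p_k$ (so that the extending character is still $a$). Both follow immediately from the hypothesis $t \geq 3$ together with the definition of covering, so the argument is clean.
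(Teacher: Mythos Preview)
Your proof is correct. Both your argument and the paper's exploit the same self-referencing trick inside a run of $a$'s, but the paper packages it more directly: rather than assuming three s-factors and extending the middle one by a single character, the paper observes that \emph{any} s-factor beginning at position $j>1$ of the run $a_k^{p_k}$ must reach at least to the end of that run, since the remaining suffix $a_k^{p_k-j+1}$ is also a prefix of $a_k^{p_k}$ and hence has an earlier (overlapping) occurrence. This immediately gives ``at most two'' without contradiction, and it yields the slightly stronger statement (an s-factor starting strictly inside a run always swallows the rest of the run) which the paper then reuses in the proof of Theorem~\ref{theo:RLE}. Your contradiction argument is perfectly valid for the lemma as stated; the paper's direct version just buys a reusable corollary for free.
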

\begin{proof}
  Consider an s-factor $f_i$ that starts at the $j$th position
  in the RL factor $a_k^{p_k}$, where $1 < j \leq p_k$.
  Since $a_k^{p_k - j + 1}$ is both a suffix and a prefix of $a_k^{p_k}$,
  we have that the s-factor extends at least to the end of $a_k^{p_k}$.
  This implies that each RL factor $a_k^{p_k}$ is always covered by
  at most 2 s-factors.
\qed
\end{proof}
Let $z$ be the number of s-factors of string $S$.
It immediately follows from Lemma~\ref{lem:RLE_LZ_linear} that $z \leq 2m$.
This allows us to describe the complexity of our algorithm
without using $z$.
Lemma~\ref{lem:RLE_LZ_linear} also implies that 
if an s-factor $f_i$ intersects with an RL factor $a_{k}^{p_k}$,
then the first RL factor of $f_i$
is always a suffix $a_{k}^p$ of $a_{k}^{p_k}$ with $p \leq p_k$.
This simplicity allows us to perform on-line s-factorization from RLE efficiently.
A proof for Theorem~\ref{theo:RLE} follows:

\begin{proof}
Let $\RLE(S) = a_1^{p_1} a_2^{p_2} \cdots a_{m}^{p_m}$.
For any $1 \leq k \leq h \leq m$,
let $\RLE(S)[k..h] = a_k^{p_k} a_{k+1}^{p_{k+1}} \cdots a_h^{p_h}$.
Let $\Substr(\RLE(S)) = \{\RLE[k..h] \mid 1 \leq k \leq h \leq m\}$.

Assume we have already computed $f_1, \ldots, f_{i-1}$
and we are computing a new s-factor $f_i$ 
from the $(\ell_i+1)$th position of $S$.
Let $a^d$ be the RL factor which contains the $(\ell_i+1)$th position,
and let $j$ be the position in the RL factor where $f_i$ begins.

Firstly, consider the case where $2 \leq j \leq d$.
Let $p = d - j + 1$, i.e., the remaining suffix of $a^d$ is $a^p$.
It follows from Lemma~\ref{lem:RLE_LZ_linear} that 
$a^p$ is a prefix of $f_i$.
In the sequel, we show how to compute the rest of $f_i$.
For any out-going edge $e = ([u], b^q, [ub^q])$ of a state $[u]$ of the RLE-DAWG for $\RLE(S)[1..j]$
and each character $a \in \Sigma$,
define
\[
 \maxe_{[u]}(a,b^q) = \max(\{ p \mid a^p \longest{u} b^q \in \Substr(\RLE(S)[1..j])\}\cup\{0\}).
\]
That is, $\maxe_{[u]}(a,b^q)$ represents the maximum exponent of the
RL factor with character $a$, that immediately precedes $\longest{u}b^q$ in $\RLE(S)[1..j]$.
For each pair $(a, b)$ of characters for which there is an out-going edge 
$([u], b^q, [ub^q])$ from state $[u]$ and $\maxe_{[u]}(a,b^q) > 0$,
we insert a point $(\maxe_{[u]}(a,b^q), q)$ into
$\RLEPoints_{[u],a,b}$.
By similar arguments to the case of {\PackedDAWG}s,
each point in
$\RLEPoints_{[u],a,b}$ corresponds to a secondary edge, or a suffix
link (labeled with $a^p$ for some $p$) of a primary child, 
so the total number of such points is bounded by $O(m)$.

Suppose we have successfully traversed the RLE-DAWG by 
$u$ with an occurrence that is immediately preceded by $a^p$ 
(i.e., $a^pu$ is a prefix of s-factor $f_i$),
and we want to traverse with the next RLE factor $b^q$ from state $[u]$.

If $u = \longest{u}$, i.e., only primary edges were traversed,
then
we query $\RLEPoints_{[u],a,b}$
for a point with maximum $x$-coordinate in the
range $[0, N] \times [q, N]$.
Let $(x, y)$ be such a point.
If $x \geq p$, then since $y \geq q$,
there must be a previous occurrence of $a^p \longest{u} b^q$,
and hence $a^p \longest{u} b^q$ is a prefix of $f_i$.
If there is an outgoing edge of $[u]$ labeled by $b^q$,
then we traverse from $[u]$ to $[ub^q]$ and update the RLE-DAGW with the next RL factor.
Otherwise, it turns out that $f_i = a^p \longest{u} b^q$.
If $x < p$, or no such point existed, 
then we query for a point with maximum $y$-coordinate 
in the range $[p, N] \times [0, q]$.
If $(x^\prime,y^\prime)$ is a such a point,
then $f_i = a^p \longest{u} b^{y^\prime}$. 

Otherwise (if $u \neq \longest{u}$),
then all occurrences of $u$ in $S[1..\ell_i]$ is immediately preceded by 
the unique RL factor $a^p$.
Thus, there exists an occurrence of $a^p u b^q$ iff $([u], b^q, [ub^q]) \in E$.
If there is no such edge,
then the last RL factor of $f_i$ is $b^y$,
where $y = \min(\max\{k \mid ([u], b^k, [ub^k]) \in E\} \cup  \{q\})$.

Secondly, let us consider the case where $j = 1$.
Let $([\varepsilon], a^g, [a^g])$ be the edge which has 
maximum exponent $g$ for the character $a$ from the source state $[\varepsilon]$.
If $g < d$, then $f_i = a^g$.
Otherwise, $a^d$ is a prefix of $f_i$,
and we traverse the RLE-DAWG in a similar way as above,
while checking an immediately preceding occurrence of $a^d$.

If we use priority search trees by McCreight (\textit{SIAM J. Comput.} 14(2),
257--276, 1985) %
and balanced binary search trees,
the above queries and updates are supported in $O(\log m)$ time 
using a total of $O(m \log N)$ bits of space.
We can do better based on the following observation.
For a set $T$ of points in a 2D plane,
a point $(p, q) \in T$ is said to be dominant 
if there is no point $(p^\prime, q^\prime) \in T$
satisfying both $p^\prime \geq p$ and $q^\prime \geq q$.
Let $\RLEDomPoints_{[u],a,b}$ denote the set of dominant points of $\RLEPoints_{[u],a,b}$.
Now, a query for a point with maximum $x$-coordinate 
in range $[0, N] \times [q, N]$ reduces to
a successor query on the $y$-coordinates of points in $\RLEDomPoints_{[u],a,b}$
(see also Fig.~\ref{fig:successor1}).
On the other hand, a query for a point with maximum $y$-coordinate 
in range $[p, N] \times [0, q]$ reduces to
a successor query on the $x$-coordinate of points in $\RLEDomPoints_{[u],a,b}$
(see also Fig.~\ref{fig:successor2}).
Hence, it suffices to maintain only the dominant points.

\begin{figure}[t]
  \centerline{
    \includegraphics[scale=0.45]{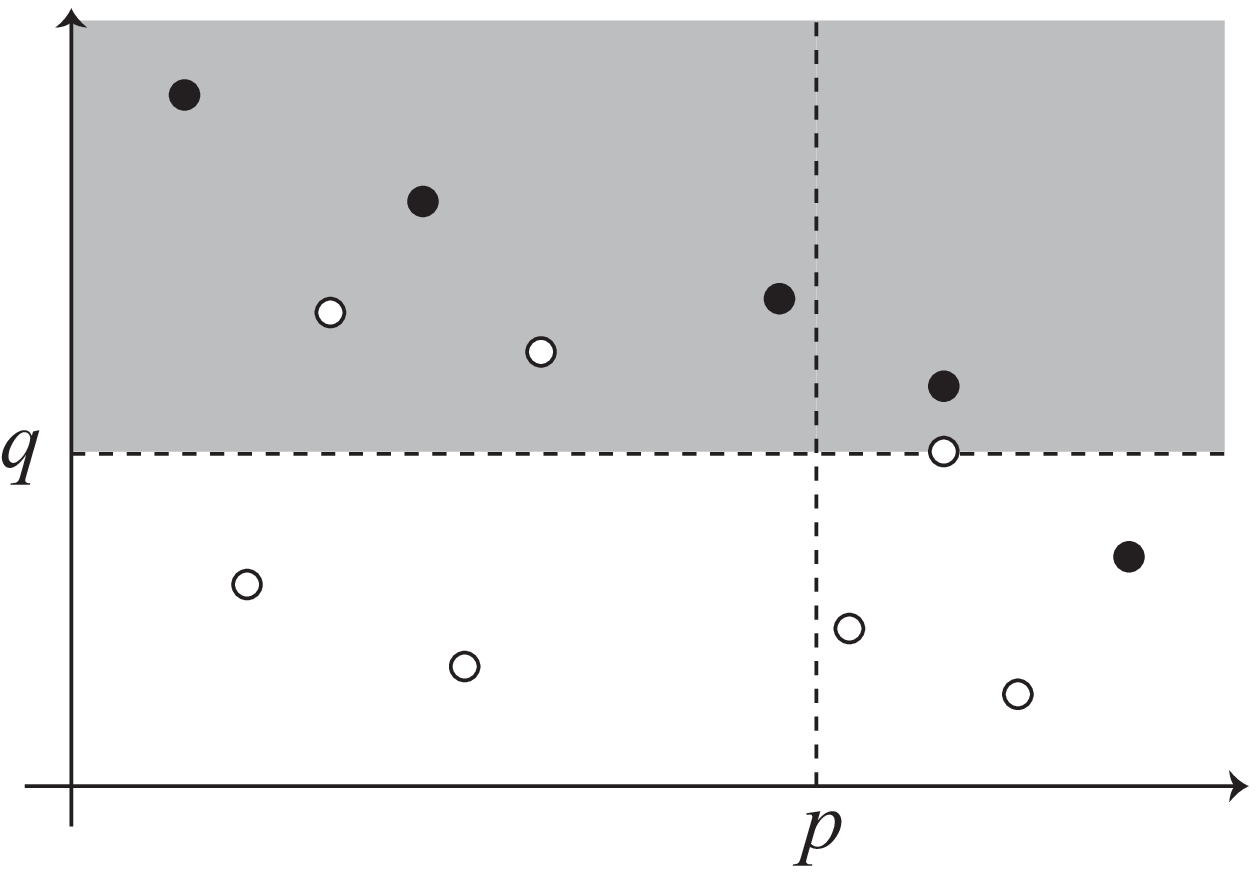}
    \hfill
    \includegraphics[scale=0.45]{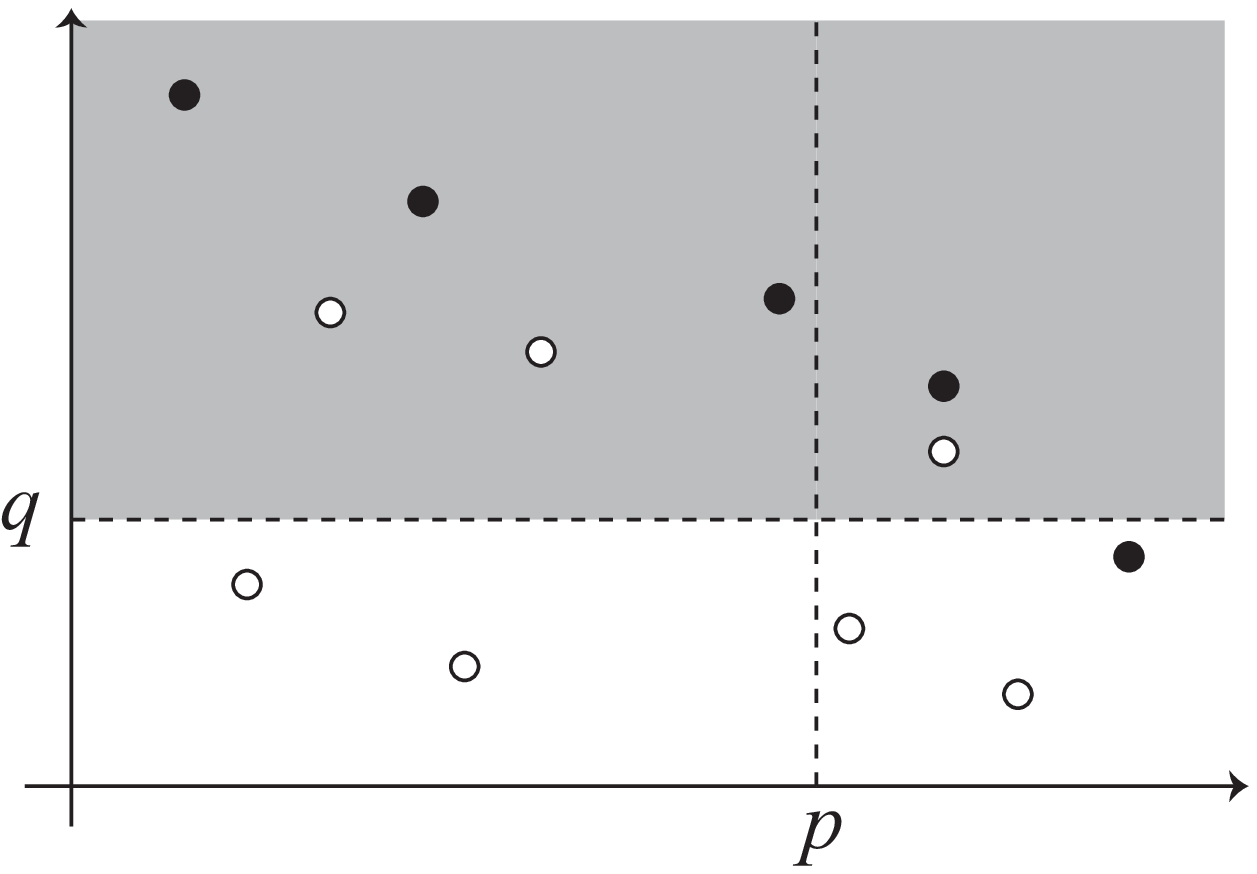}
  }
  \caption{    
  Black and white points are dominant and non-dominant points, respectively.
  Left: If there is a point $(x, q)$ with $x \geq p$,
  then we can traverse from state $[u]$ to $[ub^q]$ and continue.  
  Right: If there is no such point,
  then the last RL factor of $f_i$ is $b^q$.
  In both cases, the $y$-coordinate successor of $q$ among the dominant points
  is a point with maximum $x$-coordinate in range $[0, N] \times [q, N]$.
  }
\label{fig:successor1}
\end{figure}

\begin{figure}[t]
  \centerline{
    \includegraphics[scale=0.45]{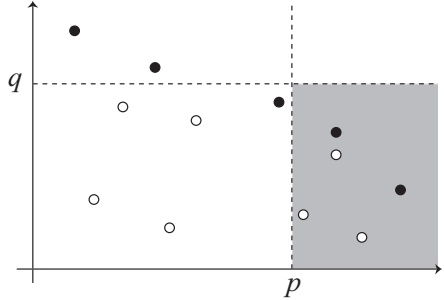}
  }
  \caption{    
  Black and white points are dominant and non-dominant points, respectively.
  Assume the $x$-coordinate of the point with maximum $x$-coordinate 
  in range $[0, N] \times [q, N]$ is less than $p$. 
  Then the $x$-coordinate successor of $p$ among the dominant points
  is a point with maximum $y$-coordinate in range $[p, N] \times [0, q]$.
  }
\label{fig:successor2}
\end{figure}

When a new dominant point is inserted into $\RLEDomPoints_{[u],a,b}$
due to an update of the RLE-DAWG,
then all the points that have become non-dominant are deleted from $\RLEDomPoints_{[u],a,b}$.
We can find each non-dominant point by a single predecessor/successor query.
Once a point is deleted from $\RLEDomPoints_{[u],a,b}$, it will never be re-inserted to $\RLEDomPoints_{[u],a,b}$.
Hence, the total number of insert/delete operations is linear in the size of $\RLEDomPoints_{[u],a,b}$,
which is $O(m)$ for all the states of the RLE-DAWG.
Using the data structure of~\cite{BeameF02},
predecessor/successor queries and insert/delete operations are 
supported in $O\left(\min \left\{\frac{(\log\log m)(\log \log N)}{\log\log\log N}, \sqrt{ \frac{\log m}{\log \log m}} \right\}\right)$ time,
using a total of $O(m \log N)$ bits of space.

Each state of the RLE-DAWG has at most $m$ children and 
the exponents of the edge labels are in range $[1, N]$.
Hence, assuming an integer alphabet $\Sigma = \{1, 2, \ldots, N\}$ and
using the data structure of~\cite{BeameF02}, we
can search branches at each state in $O\left(\min \left\{\frac{(\log\log m)(\log \log N)}{\log\log\log N}, \sqrt{ \frac{\log m}{\log \log m}} \right\}\right)$ time,
using a total of $O(m \log N)$ bits of space.
A final technicality is how to access
 the set $\RLEDomPoints_{[u],a,b}$ which 
is associated with a pair $(a, b)$ of characters.
To access $\RLEDomPoints_{[u],a,b}$ at each state $[u]$,
we maintain two level search structures,
one for the first characters and the other for the second characters
of the pairs.
At each state $[u]$
we can access $\RLEDomPoints_{[u],a,b}$ in $O\left(\min \left\{\frac{(\log\log m)(\log \log N)}{\log\log\log N}, \sqrt{ \frac{\log m}{\log \log m}} \right\}\right)$ time
with a total of $O(m \log N)$ bits of space, again using
the data structure of~\cite{BeameF02}.
This completes the proof for Theorem~\ref{theo:RLE}.
\qed
\end{proof}

\end{document}